\documentclass[10pt,letterpaper]{amsart}
\usepackage{amsmath,amssymb,amsfonts,amsthm}
\usepackage{graphicx}
\usepackage{bm}
\usepackage{mathrsfs}
\usepackage{hyperref}
\usepackage{cleveref}
\usepackage{doi}

\numberwithin{equation}{section}

\newtheorem{theorem}{Theorem}[section]
\newtheorem{proposition}[theorem]{Proposition}
\newtheorem{lemma}[theorem]{Lemma}
\newtheorem{corollary}[theorem]{Corollary}
\theoremstyle{definition}
\newtheorem{definition}[theorem]{Definition}
\newtheorem{remark}[theorem]{Remark}
\newtheorem{example}[theorem]{Example}
\newtheorem{assumption}[theorem]{Assumption}

\title[Rigidity and Holonomy of Time-Scaled Intertwining Cocycles]
{Gauge Rigidity and Holonomy Classification of Time-Scaled Intertwining Cocycles on Hilbert Bundles}

\author{Anton Alexa}
\address{Independent Researcher, Chernivtsi, Ukraine}
\email{mail@antonalexa.com}

\subjclass[2020]{Primary 81Q70; Secondary 47D06, 53C29, 37A20}
\keywords{time-scaled intertwining cocycle, gauge rigidity,
operator semigroup, operator holonomy, Berry--Wilczek--Zee transport,
Hilbert bundle}

\begin{document}

\begin{abstract}
We extend the discrete gauge rigidity of time-scaled intertwining cocycles
to operator networks over a connected manifold $M$ and classify the
holonomy that the continuous setting admits. For dissipative semigroups
$\mathcal{S}_x(t)=e^{-tA_x}$ linked by transport operators
$K_\gamma\mathcal{S}_{\gamma(0)}(t)=\mathcal{S}_{\gamma(1)}(\lambda(\gamma)t)K_\gamma$
along paths $\gamma$, we prove that the scaling field carries no monodromy:
$\lambda(x,y)=\tau(x)/\tau(y)$ for a continuous positive function $\tau$,
unique up to a multiplicative constant, with multiplicativity forced by
the intertwining relation rather than assumed. For regular cocycles the
transport defines a bounded operator connection on a Hilbert bundle over
$M$; the holonomy of a closed loop is
confined to the commutant of the generator and decomposes into spectral
sectors, on each of which it is adiabatic (Berry--Wilczek--Zee) transport
twisted by an independent commutant-valued sector potential; for flat
unitary cocycles the holonomy is classified by unitary representations
of $\pi_1(M)$ in the commutant. An explicit rotating network over $S^1$
exhibits holonomy equal to the parity operator while the Berry connection
one-form of every level vanishes identically: the monodromy is carried by
$\mathbb{Z}_2$ orientation classes of M\"{o}bius eigenline bundles.
\end{abstract}
\maketitle

\section{Introduction}
\label{sec:introduction}

Intertwining relations between operator semigroups are classical in functional
analysis~\cite{hille-phillips,pazy1983,engel-nagel}, but most existing
frameworks treat either a single pair of generators or time-preserving transfer
maps~\cite{haase2006}. In a companion paper~\cite{alexa2026} we introduced a network formalism
for dissipative semigroups $\mathcal{S}_i(t)=e^{-tA_i}$ indexed by a finite
set $I$, linked by time-scaled intertwining cocycles
$K_{ij}\mathcal{S}_j(t)=\mathcal{S}_i(\lambda_{ij}t)K_{ij}$ with positive
scaling factors $\lambda_{ij}$. The central result of that theory is a gauge
rigidity theorem: the scaling factors are necessarily of the form
$\lambda_{ij}=\tau_i/\tau_j$ for a single family of positive constants
$\{\tau_i\}$, and the renormalized generators $\{\tau_i A_i\}$ form a common
isospectral class. In the discrete setting, all cycle products of transport
operators are trivial --- the absence of continuous paths makes flatness
automatic rather than a constraint.

The natural question is what happens when the index set is replaced by a
continuous parameter space. In this paper we answer that question. Let $M$ be
a connected smooth manifold, $\{A_x\}_{x\in M}$ a smooth family of positive
self-adjoint operators with compact resolvent on a separable Hilbert space
$\mathcal{H}$, and $\{K_{x,y}\}$ a continuous family of bounded injective
operators satisfying the cocycle identity $K_{x,z}=K_{x,y}K_{y,z}$ and the
scaled intertwining relation
$K_{x,y}\mathcal{S}_y(t)=\mathcal{S}_x(\lambda(x,y)t)K_{x,y}$
for a continuous scaling field $\lambda\colon M\times M\to\mathbb{R}_{>0}$.
Our first main result shows that gauge rigidity persists in this
continuous setting.

\begin{theorem}[Continuous Gauge Rigidity]
\label{thm:intro-gauge}
There exists a continuous function $\tau\colon M\to\mathbb{R}_{>0}$,
unique up to a positive multiplicative constant, such that
\begin{equation}
\lambda(x,y) = \frac{\tau(x)}{\tau(y)}
\qquad\text{for all }x,y\in M.
\end{equation}
Consequently, all renormalized generators $\tau(x)A_x$ belong to a common
isospectral class: $\sigma(\tau(x)A_x)=\sigma(\tau(y)A_y)$ for all $x,y\in M$.
\end{theorem}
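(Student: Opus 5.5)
Our plan is to first extract spectral information from the intertwining relation, then deduce multiplicativity of $\lambda$ from the cocycle identity, and finally define $\tau$ by fixing a base point. Fix $x,y\in M$ and an eigenpair $A_y\varphi=\mu\varphi$, $\varphi\neq 0$. Then $\mathcal{S}_y(t)\varphi=e^{-t\mu}\varphi$, so the intertwining relation gives
\begin{equation*}
\mathcal{S}_x(\lambda(x,y)t)\,K_{x,y}\varphi = e^{-t\mu}K_{x,y}\varphi .
\end{equation*}
Since $K_{x,y}$ is injective, $\psi=K_{x,y}\varphi\neq 0$ is an eigenvector of $\mathcal{S}_x(s)$ with eigenvalue $e^{-s\mu/\lambda(x,y)}$ for all $s\ge 0$. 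Differentiating at $s=0$ (or using the spectral theorem for the self-adjoint generator) gives $\psi\in D(A_x)$ and $A_x\psi=(\mu/\lambda(x,y))\psi$. Hence $\lambda(x,y)^{-1}\sigma(A_y)\subseteq\sigma(A_x)$; because the resolvents are compact, both spectra are purely discrete.

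Next we take $x=y=z$. The cocycle identity gives $K_{x,x}=K_{x,x}^2$, and injectivity forces $K_{x,x}=I$. The intertwining relation then reads $\mathcal{S}_x(t)=\mathcal{S}_x(\lambda(x,x)t)$. Evaluating on an eigenvector with eigenvalue $\mu>0$ (positivity of $A_x$) yields $e^{-t\mu}=e^{-\lambda(x,x)t\mu}$, so $\lambda(x,x)=1$.

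For multiplicativity, we compose the two relations. From
\begin{equation*}
K_{x,y}K_{y,z}\mathcal{S}_z(t)=K_{x,y}\mathcal{S}_y(\lambda(y,z)t)K_{y,z}=\mathcal{S}_x(\lambda(x,y)\lambda(y,z)t)K_{x,z},
\end{equation*}
and comparing with $K_{x,z}\mathcal{S}_z(t)=\mathcal{S}_x(\lambda(x,z)t)K_{x,z}$, we get that on the nonzero range of $K_{x,z}$ the two operators $\mathcal{S}_x(\lambda(x,y)\lambda(y,z)t)$ and $\mathcal{S}_x(\lambda(x,z)t)$ agree. We test this on $\psi=K_{x,z}\varphi$ with $\varphi$ an eigenvector of $A_z$; by the first step $\psi$ is an eigenvector of $A_x$ with a strictly positive eigenvalue $\nu$. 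This gives $e^{-\nu a t}=e^{-\nu b t}$ for all $t$, so $a=b$, i.e.
\begin{equation*}
\lambda(x,z)=\lambda(x,y)\lambda(y,z).
\end{equation*}
This step is the main point and the one that needs care: it requires a nonzero vector in the range on which the semigroup acts with nontrivial decay. We rely on positivity of $A_x$ (no zero eigenvalue) to make $t\mapsto e^{-\nu s t}$ injective in $s$. If only nonnegativity were assumed, we would instead pick an eigenvector of $A_z$ with positive eigenvalue, which exists unless $A_z=0$.

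Now fix $x_0\in M$ and set $\tau(x)=\lambda(x,x_0)$. This is continuous and positive because $\lambda$ is. Multiplicativity together with $\lambda(x_0,y)\lambda(y,x_0)=\lambda(x_0,x_0)=1$ gives $\lambda(x,y)=\tau(x)/\tau(y)$. For uniqueness, if $\tau'$ also works then $\tau'(x)/\tau(x)=\tau'(x_0)/\tau(x_0)$ for all $x$, so $\tau'$ is a constant multiple of $\tau$.

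For isospectrality, the first step gives $\sigma(A_y)\subseteq\lambda(x,y)\sigma(A_x)$, i.e. $\tau(y)\sigma(A_y)\subseteq\tau(x)\sigma(A_x)$. Exchanging the roles of $x$ and $y$ gives the reverse inclusion, so $\sigma(\tau(x)A_x)=\sigma(\tau(y)A_y)$. If the intended statement includes multiplicities, we would additionally use injectivity of $K_{x,y}$ on each finite-dimensional eigenspace in both directions. Connectedness of $M$ is not needed for this argument, since the cocycle is defined on all pairs $(x,y)$.
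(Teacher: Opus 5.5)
Your proof is correct and follows the paper's skeleton. Multiplicativity comes from composing the intertwining relation and testing on an eigenvector with positive eigenvalue, and then $\tau(x)=\lambda(x,x_0)$ with the purely algebraic uniqueness argument. The differences are local. You test the composed identity directly on $K_{x,z}\varphi$, with $\varphi$ an eigenvector of $A_z$, so you need only injectivity of $K$ rather than the bounded invertibility of Proposition~\ref{prop:diagonal}(ii). You also get isospectrality by transporting eigenvectors in both directions, which relies on the spectra being purely discrete. The paper instead uses the similarity $K_{x,y}\,\tau(y)A_y=\tau(x)A_x\,K_{x,y}$ with matched domains, which gives equal spectra, including multiplicities, in one step.
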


The key point --- and what makes this non-trivial --- is that multiplicativity
of $\lambda$, i.e.\ $\lambda(x,z)=\lambda(x,y)\lambda(y,z)$, is not assumed
but derived from the semigroup intertwining relation. The logical chain is:
semigroup intertwining forces generator transport
$K_{x,y}A_y=\lambda(x,y)A_xK_{x,y}$, which in turn forces multiplicativity
of $\lambda$, which then yields the gauge form by a standard cocycle argument.

In the continuous setting a new geometric phenomenon appears that is absent
in the discrete case. Transport around a non-contractible loop need not
return to the identity, so the natural general formulation indexes the
transfer operators by paths rather than by pairs of endpoints; a two-point
cocycle $K_{x,y}$ is precisely the special case of path-independent
transport (Section~\ref{sec:connection}). For regular cocycles the
path transport defines a bounded operator connection on a Hilbert
bundle
$\mathcal{H}\times M\to M$~\cite{brody-hughston2001}, and for a closed
loop $\gamma$ in $M$ the holonomy operator
$K_\gamma\in\mathcal{B}(\mathcal{H})$ need not be the identity. Two
structural constraints govern this freedom. First, the scalar layer is
topologically inert: the scaling field never acquires monodromy, so the
gauge function $\tau$ is single-valued on any base. Second, the holonomy
of a closed loop commutes with the semigroup at the base point, hence
preserves every spectral subspace of the generator. The two constraints
combine into the second main result.

\begin{theorem}[Holonomy classification; informal statement]
\label{thm:intro-classification}
Let $\{K_\gamma,\lambda\}$ be a path intertwining cocycle
\textup{(Definition~\ref{def:path-cocycle})} on a continuous operator
network over $M$.
\emph{(a)} The scaling field satisfies $\lambda(\gamma)=1$ for every
closed loop $\gamma$: time-scaling never acquires monodromy
\textup{(Lemma~\ref{lem:no-scalar-monodromy})}.
\emph{(b)} For flat unitary cocycles, the holonomy defines a group
homomorphism $\rho\colon\pi_1(M,x_0)\to\mathcal{U}(\{A_{x_0}\}')$ into
the unitary commutant of the generator, and every such homomorphism is
realized on an isospectral network --- over $S^1$, even on a constant
one \textup{(Theorem~\ref{thm:classification})}.
\end{theorem}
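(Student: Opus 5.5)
We prove (a) and (b) separately, working throughout with the path form of the intertwining relation, $K_\gamma\mathcal{S}_{\gamma(0)}(t)=\mathcal{S}_{\gamma(1)}(\lambda(\gamma)t)K_\gamma$, together with the concatenation rule $K_{\gamma_1\ast\gamma_2}=K_{\gamma_2}K_{\gamma_1}$ (order as in Definition~\ref{def:path-cocycle}).

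\emph{Part (a).} First we upgrade semigroup intertwining to generator transport, exactly as in the two-point case. Differentiating at $t=0$ on $\operatorname{dom}(A_{\gamma(0)})$ gives $K_\gamma A_{\gamma(0)}\subset\lambda(\gamma)A_{\gamma(1)}K_\gamma$. Now let $\gamma$ be a closed loop at $x_0$, write $A=A_{x_0}$, $\lambda=\lambda(\gamma)$ and $K=K_\gamma$, so that $KA=\lambda AK$. Take any eigenvector $Av=\mu v$, which exists since the resolvent is compact. Then $A(Kv)=(\mu/\lambda)Kv$, and $Kv\neq0$ by injectivity, so $\mu/\lambda\in\sigma(A)$. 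Iterating shows that $\mu\lambda^{-n}\in\sigma(A)$ for all $n\ge0$. The spectrum of $A$ is discrete, positive, bounded below by $\mu_{\min}>0$, and accumulates only at $+\infty$. Applying this with $\mu=\mu_{\min}$ forces $\lambda\le1$. Applying the same argument to the reversed loop $\bar\gamma$, whose scaling is $\lambda(\bar\gamma)=\lambda(\gamma)^{-1}$, forces $\lambda\ge1$. Hence $\lambda=1$.

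The identity $\lambda(\bar\gamma)=\lambda(\gamma)^{-1}$ needs justification. It should come from multiplicativity $\lambda(\gamma_1\ast\gamma_2)=\lambda(\gamma_1)\lambda(\gamma_2)$, which is derived as in Theorem~\ref{thm:intro-gauge}. On the concatenation, the composed operator intertwines with both factors $\lambda(\gamma_1\ast\gamma_2)$ and $\lambda(\gamma_1)\lambda(\gamma_2)$. Since $K$ is injective and $A$ is unbounded, $A$ cannot intertwine with two distinct scalings, so the two factors agree. Alternatively, if the reversed loop is not available, we can avoid it and use a dimension count instead. The bijection $v\mapsto Kv$ maps $\ker(A-\mu)$ injectively into $\ker(A-\mu/\lambda)$. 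If $\lambda\ne1$, comparing the counting functions $N(E)=\#\{\mu\le E\}$ gives $N(E)\le N(E/\lambda)$ for all $E$, which contradicts $N$ being finite, nondecreasing and eventually strictly increasing. I would present this spectral-shift argument as the main step.

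\emph{Part (b).} Flatness means that $K_\gamma$ depends only on the homotopy class of $\gamma$ rel endpoints. By the concatenation rule, $\rho([\gamma])=K_\gamma$ is then a homomorphism, or an anti-homomorphism depending on convention, which we fix by composing with inversion. By (a), $\lambda(\gamma)=1$, so $K_\gamma$ commutes with $\mathcal{S}_{x_0}(t)$ for all $t$. It therefore commutes with the resolvent $(1+A)^{-1}=\int_0^\infty e^{-t}\mathcal{S}_{x_0}(t)\,dt$, and hence lies in $\{A_{x_0}\}'$. Unitarity is given by hypothesis. This proves the first half of (b).

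For realization, take a homomorphism $\rho\colon\pi_1(M,x_0)\to\mathcal{U}(\{A_{x_0}\}')$. Over $S^1$ we use the constant network $A_x\equiv A$ and $\lambda\equiv1$, and fix the unitary $U=\rho(\text{generator})\in\{A\}'$. Since $\{A\}'$ is a von Neumann algebra, its unitary group is path-connected: write $U=e^{iH}$ with $H=H^*\in\{A\}'$ by Borel functional calculus. Define the transport along a path by $K_\gamma=e^{iH\,\Delta\theta(\gamma)}$, where $\Delta\theta(\gamma)$ is the lifted angle change of $\gamma$. This is homotopy invariant, multiplicative, unitary, and commutes with $\mathcal{S}(t)$, and the loop holonomy is $\rho$.

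For general $M$ we use the associated flat bundle. Form $\widetilde M\times_\rho\mathcal{H}$ and trivialize it: since $\mathcal{U}(\{A\}')$ is contractible in the norm topology when it is an infinite type-I factor, and in general it is at least path-connected, we obtain a continuous family $g_x$ of unitaries. Set $A_x=g_xAg_x^{*}$; this network is isospectral rather than constant. The transport is $K_\gamma=g_{\gamma(1)}\,\rho\text{-parallel}\,g_{\gamma(0)}^{*}$.

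I expect the main obstacle to be this global trivialization for arbitrary $M$. It requires a Kuiper-type contractibility of $\mathcal{U}(\{A\}')$ in the topology in which the cocycle is assumed continuous. The commutant is a direct sum of matrix algebras $\bigoplus_k M_{n_k}(\mathbb{C})$ over the finite-dimensional eigenspaces, which is not contractible sectorwise. Because of this, I would fall back on the strong operator topology, where $\prod_k U(n_k)$ with the product topology still fails to be contractible. The fallback is therefore to realize $\rho$ by a non-constant family $A_x$ that rotates the eigenspaces through an auxiliary Hilbert space, e.g. by stabilizing $\mathcal{H}\otimes\ell^2$. This explains why the statement only claims constancy over $S^1$.
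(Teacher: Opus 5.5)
\textbf{Overall.} Your part (a), the homomorphism into the commutant, and the $S^1$ realization follow the paper's route. For the $S^1$ case you need the paper's normalization $\omega=-(B/2\pi)\,d\theta$, i.e.\ $K_\gamma=e^{iH\Delta\theta(\gamma)/2\pi}$; otherwise the generating loop returns $e^{2\pi iH}$ rather than $U=e^{iH}$.

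\textbf{The gap: realization over a general $M$.} You trivialize $\widetilde M\times_\rho\mathcal{H}$ by a family $g_x$ in $\mathcal{U}(\{A\}')$, and this cannot work, for several reasons.
\begin{enumerate}
\item The algebra $\{A\}'=\bigoplus_k M_{n_k}(\mathbb{C})$ has infinite-dimensional center, so it is never a factor and the ``infinite type-I factor'' clause never applies.
\item Path-connectedness of a structure group only removes the obstruction over one-dimensional bases. Over $\mathbb{RP}^2$, let $\rho$ act by $-1$ on one simple eigenline and trivially elsewhere. In that sector this produces the complexified tautological line bundle, whose first Chern class is the nonzero element of $H^2(\mathbb{RP}^2;\mathbb{Z})\cong\mathbb{Z}/2$. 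So no trivialization with values in the commutant exists.
\item If $g_x$ did lie in $\{A\}'$, then $g_xAg_x^*=A$ and the network would be constant, contradicting your own remark that it is merely isospectral.
\item The stabilization fallback is not viable as stated: $A\otimes I$ on $\mathcal{H}\otimes\ell^2$ has eigenvalues of infinite multiplicity, so it no longer has compact resolvent.
\end{enumerate}

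\textbf{The missing idea.} This is what the paper uses: trivialize $E_\rho=(\widetilde M\times\mathcal{H})/\pi_1(M,x_0)$ with the full unitary group, not the commutant. Since $\mathcal{H}$ is infinite-dimensional, $\mathcal{U}(\mathcal{H})$ is norm-contractible by Kuiper's theorem, so $E_\rho$ has a global trivialization $T=\{T_x\}$. Because $\rho$ takes values in $\{A_0\}'$, the constant generator descends to a fiberwise generator on $E_\rho$. Pushing it and the flat transport forward gives $A_x=T_xA_0T_x^{-1}$, and your formula for $K_\gamma$ holds with $g_x$ replaced by $T_x$. The $T_x$ do not commute with $A_0$, which is exactly why the network is isospectral rather than constant. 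It is constant precisely when the flat principal $\mathcal{U}(\{A_0\}')$-bundle is trivial; over $S^1$ this is guaranteed by your exponential $U=e^{iH}$. The paper also needs $T$ to be $C^1$ and domain-preserving to obtain a regular cocycle on a network with common domain, so the general-$M$ claim carries that proviso.

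\textbf{Smaller points.}
\begin{enumerate}
\item If ``flat'' means $F=0$, homotopy invariance is the content of the flatness characterization (i)$\Rightarrow$(iii). It is not available as a definition.
\item Drop the dimension-count alternative in (a). The bound $N(E)\le N(E/\lambda)$ is vacuous for $\lambda<1$, and $\lambda<1$ is genuinely compatible with an injective $K$ on a single loop: for $Ae_n=2^ne_n$ and the shift $Ke_n=e_{n+1}$ one has $K\mathcal{S}(t)=\mathcal{S}(t/2)K$.
\item Invertibility $K_{\bar\gamma}=K_\gamma^{-1}$ from (P1) is therefore indispensable. With it, your reversed-loop argument is the paper's $\sigma(A_{x_0})=\lambda\,\sigma(A_{x_0})$ argument. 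You must use the smallest \emph{strictly} positive eigenvalue (or your iteration with any positive eigenvalue), since $0\in\sigma(A_{x_0})$ is allowed.
\end{enumerate}
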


Nontrivial monodromy is a concrete phenomenon, not a formal
possibility: an explicit rotating network on $\ell^2$ over $S^1$
exhibits holonomy equal to the parity operator
(Example~\ref{ex:holonomy}), with a rotating anisotropic oscillator
on $L^2(\mathbb{R}^2)$ as its physical counterpart
(Remark~\ref{rem:oscillator}).

Finally, we make the relation to the Berry connection of adiabatic
quantum mechanics~\cite{berry1984,simon1983} precise. The transport
preserves each renormalized spectral sector
$E_\mu(x)=\ker(\tau(x)A_x-\mu I)$, and its restriction
decomposes canonically: it is Berry parallel transport --- in the
degenerate case, the non-Abelian transport of Wilczek and
Zee~\cite{wilczek-zee1984} --- twisted by an independent
commutant-valued \emph{sector potential}, the spectral-sector block
of the connection form (Theorem~\ref{prop:berry}). Adiabatic theory is thus
a finite-rank sector of the present framework, and it captures the
transport exactly on the class of adiabatically normalized cocycles,
for which the sector potentials vanish. In particular the holonomy is
not determined by the Berry connection one-forms
alone~\cite{carollo2020}, in two distinct ways: through the sector
potential, and --- even for normalized cocycles --- through global
topology: in the example above the Berry one-form of every level
vanishes identically, the eigenfunctions being real, yet the holonomy
acts as $(-1)^{n}$ on the $n$-th eigenline. The latter monodromy is
carried by $\mathbb{Z}_2$ orientation classes of M\"{o}bius eigenline
bundles, the infinite-dimensional counterpart of the Longuet--Higgins
sign change~\cite{herzberg-longuet-higgins1963}, and is invisible to
the local connection data.
\section{Continuous Operator Networks}
\label{sec:networks}

Throughout this paper $M$ denotes a connected smooth manifold and
$\mathcal{H}$ a fixed separable Hilbert space. We work with a trivial
Hilbert bundle $\mathcal{H}\times M\to M$; the fiber over each $x\in M$
is identified with $\mathcal{H}$. The following regularity assumption
governs the dependence of the operators on the parameter $x$.

\begin{assumption}[Common domain regularity]\label{ass:R}
There exists a dense subspace $D\subset\mathcal{H}$, independent of $x$,
such that $\mathcal{D}(A_x)=D$ for all $x\in M$, and the map
$x\mapsto A_x$ is continuous in the strong resolvent sense: for some
(hence every) $\mu$ in the resolvent set,
\begin{equation}
(A_x-\mu)^{-1}\to (A_y-\mu)^{-1}
\quad\text{strongly in }\mathcal{H}
\quad\text{as }x\to y.
\end{equation}
\end{assumption}

Assumption~\ref{ass:R} is satisfied by analytic families of operators
in the sense of Kato~\cite{kato1980}, and by Schr\"{o}dinger operators
with smoothly varying potentials~\cite{reed-simon2}. The case
of $x$-dependent domains is handled by the Kato analytic family
framework~\cite[Ch.~VII]{kato1980} and is left for future work.

\begin{definition}[Continuous operator network]
\label{def:network}
A continuous operator network over $M$ consists of a family
$\{A_x\}_{x\in M}$ of positive self-adjoint operators on $\mathcal{H}$
with compact resolvent, satisfying Assumption~\ref{ass:R}, together with the
associated strongly continuous contraction semigroups
$\mathcal{S}_x(t)=e^{-tA_x}$, $t\ge 0$ \cite{pazy1983,engel-nagel}.
\end{definition}

Since each $A_x$ has compact resolvent, its spectrum is purely discrete,
$\sigma(A_x)=\{\alpha_{x,1}\le\alpha_{x,2}\le\cdots\}\nearrow\infty$
with finite multiplicities, and $\mathcal{S}_x(t)$ is compact for
every $t>0$~\cite{reed-simon2}; moreover
$\|\mathcal{S}_x(t)\|_{\mathcal{B}(\mathcal{H})}\le e^{-\alpha_{x,1}t}$.

\begin{remark}[Regularity for the connection]
Assumption~\ref{ass:R} is sufficient for the gauge rigidity results of
Sections~\ref{sec:cocycles}--\ref{sec:gauge}. The construction of the
infinitesimal operator connection in Section~\ref{sec:connection} requires
the stronger condition that the cocycle admits a bounded connection
one-form; this regularity is stated as Assumption~\ref{ass:D} in
that section.
\end{remark}

\begin{remark}[Unitary networks]
\label{rem:unitary}
All results of this paper hold, with identical proofs, when the
contraction semigroups $\mathcal{S}_x(t)=e^{-tA_x}$, $t\ge0$, are
replaced by the unitary groups $U_x(t)=e^{-itA_x}$, $t\in\mathbb{R}$,
generated by the same positive self-adjoint operators: every argument
below uses only the generator relation obtained by differentiating the
intertwining identity at $t=0$, together with discreteness of the
spectrum and the existence of a strictly positive eigenvalue. Two
points deserve note. First, the time-scaled intertwining relation is
not invariant under the shift $A_x\mapsto A_x+c$, so the normalization
of the ground energy is part of the data; positivity of $A_x$ fixes
it. Second, in the unitary reading the results apply directly to
Schr\"{o}dinger dynamics, which is the natural setting both for the
adiabatic interpretation of Section~\ref{sec:berry} and for the
physical reading of Remark~\ref{rem:second-clock}.
\end{remark}

\section{Two-Point Intertwining Cocycles}
\label{sec:cocycles}

With the operator network in place, we introduce the intertwining
structure. The central object is a family of bounded operators
$K_{x,y}$ linking the semigroups at different base points, subject to
a cocycle identity~\eqref{eq:cocycle-C} and a time-scaled intertwining
relation~\eqref{eq:cocycle-I}. The scaling field $\lambda(x,y)$ is
part of the data, but its multiplicativity --- the key property needed
for gauge rigidity --- is not assumed: it is a consequence of the
semigroup intertwining and is derived in
Proposition~\ref{prop:multiplicativity} below.

\begin{definition}[Two-point intertwining cocycle]
\label{def:cocycle}
A \emph{two-point intertwining cocycle} on the operator network
$\{A_x\}_{x\in M}$ is a family $\{K_{x,y}\}_{x,y\in M}$ of bounded
operators on $\mathcal{H}$, together with a continuous function
$\lambda\colon M\times M\to\mathbb{R}_{>0}$, satisfying the following
conditions for all $x,y,z\in M$ and all $t\ge 0$:
\begin{align}
\label{eq:cocycle-C}
&K_{x,z} = K_{x,y}K_{y,z} \tag{C}\\[4pt]
\label{eq:cocycle-I}
&K_{x,y}\,\mathcal{S}_y(t) = \mathcal{S}_x\!\left(\lambda(x,y)\,t\right)K_{x,y} \tag{I}
\end{align}
and the map $(x,y)\mapsto K_{x,y}\,h$ is continuous in $\mathcal{H}$ for
every $h\in\mathcal{H}$ (joint strong continuity). Each operator
$K_{x,y}$ is assumed to be injective. The
function $\lambda(x,y)$ is called the \emph{scaling field} of the
cocycle.
\end{definition}

\begin{remark}[Discrete case]
\label{rem:discrete-case}
When $M = I$ is a finite set with the discrete topology,
Definition~\ref{def:cocycle} reduces exactly to the time-scaled
intertwining cocycle of the companion paper \cite{alexa2026}, and
condition~\eqref{eq:cocycle-C} is the cycle consistency identity of
that theory. All results of the present section generalize the
corresponding statements of the companion paper from finite $I$ to
continuous $M$; the proofs are structurally identical.
\end{remark}

\begin{proposition}[Diagonal and invertibility]
\label{prop:diagonal}
Let $\{K_{x,y},\lambda\}$ be a two-point intertwining cocycle. Then:
\begin{enumerate}
\item[\emph{(i)}] $K_{x,x} = I_{\mathcal{H}}$ for every $x\in M$.
\item[\emph{(ii)}] Each $K_{x,y}$ is boundedly invertible, with
$K_{x,y}^{-1} = K_{y,x}$.
\item[\emph{(iii)}] $\lambda(x,x) = 1$ and
$\lambda(x,y)\lambda(y,x) = 1$ for all $x,y\in M$.
\end{enumerate}
\end{proposition}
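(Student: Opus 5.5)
For (i), I apply the cocycle identity \eqref{eq:cocycle-C} with $x=y=z$, which gives $K_{x,x}=K_{x,x}K_{x,x}$. Hence $K_{x,x}(K_{x,x}-I)=0$. Since $K_{x,x}$ is injective by Definition~\ref{def:cocycle}, it follows that $K_{x,x}h=h$ for every $h$, so $K_{x,x}=I_{\mathcal{H}}$. No continuity is needed here; injectivity carries the whole step.

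For (ii), I apply \eqref{eq:cocycle-C} twice, with $(x,y,x)$ and $(y,x,y)$, and use (i). This gives $K_{x,y}K_{y,x}=K_{x,x}=I$ and $K_{y,x}K_{x,y}=K_{y,y}=I$. So $K_{x,y}$ is a bijection whose two-sided inverse $K_{y,x}$ is bounded by hypothesis. Bounded invertibility therefore follows directly, without the open mapping theorem.

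For (iii), I first insert $x=y$ into \eqref{eq:cocycle-I} and use (i). This yields $\mathcal{S}_x(t)=\mathcal{S}_x(\lambda(x,x)t)$ for all $t\ge0$. To extract $\lambda(x,x)=1$, I test the identity on an eigenvector. Because $A_x$ is positive with compact resolvent, it has an eigenvector $\varphi$ with eigenvalue $\alpha>0$. Here I rely on positivity meaning strictly positive, so that the spectrum is bounded away from $0$, as the paper's convention of a strictly positive eigenvalue in Remark~\ref{rem:unitary} indicates. Applying both sides to $\varphi$ gives $e^{-t\alpha}=e^{-\lambda(x,x)t\alpha}$, and therefore $\lambda(x,x)=1$. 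If $A_x$ could have $0$ in its spectrum, it would still have some positive eigenvalue unless $A_x=0$, and compact resolvent on an infinite-dimensional $\mathcal{H}$ rules that out. The same eigenvector test works for the second identity.

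For the reciprocity relation, I compose the intertwining relations. The relation \eqref{eq:cocycle-I} for $(y,x)$ at time $\lambda(x,y)t$ combines with \eqref{eq:cocycle-I} for $(x,y)$ at time $t$ to give
\[
K_{y,x}K_{x,y}\mathcal{S}_y(t)=K_{y,x}\mathcal{S}_x(\lambda(x,y)t)K_{x,y}=\mathcal{S}_y(\lambda(y,x)\lambda(x,y)t)K_{y,x}K_{x,y}.
\]
By (ii) the left-hand product of $K$'s is the identity, so $\mathcal{S}_y(t)=\mathcal{S}_y(\lambda(x,y)\lambda(y,x)t)$. The same eigenvector argument, now applied to $A_y$, gives $\lambda(x,y)\lambda(y,x)=1$. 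The only mildly delicate point in the whole proof is this use of a nonzero eigenvalue to cancel the time scaling; all other steps are algebraic consequences of \eqref{eq:cocycle-C} and injectivity.
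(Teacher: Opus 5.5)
Your proof is correct and follows the paper's argument step for step. It uses idempotence plus injectivity for (i), the cocycle identity at $(x,y,x)$ and $(y,x,y)$ for (ii), and, for (iii), testing the scaled semigroup identities on an eigenvector with positive eigenvalue; the only cosmetic differences (composing in the opposite order to get an identity for $\mathcal{S}_y$ rather than $\mathcal{S}_x$, and justifying a positive eigenvalue via $A_x\neq0$ rather than via unboundedness of the discrete spectrum) are immaterial.
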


\begin{proof}
\emph{(i).} Setting $z = x = y$ in \eqref{eq:cocycle-C} gives
$K_{x,x} = K_{x,x}K_{x,x}$, so $K_{x,x}$ is an idempotent on
$\mathcal{H}$. Since $K_{x,x}$ is injective, the relation
$K_{x,x}(I-K_{x,x}) = 0$ forces $I - K_{x,x} = 0$, i.e.\
$K_{x,x} = I$.

\emph{(ii).} From \eqref{eq:cocycle-C} with $z$ replaced by $x$:
$K_{x,x} = K_{x,y}K_{y,x}$, whence $K_{x,y}K_{y,x} = I$
by part~(i). Similarly, setting $x$ replaced by $y$:
$K_{y,x}K_{x,y} = K_{y,y} = I$. Thus $K_{y,x}$ is both a left and
right inverse of $K_{x,y}$, so $K_{x,y}^{-1} = K_{y,x}$.

\emph{(iii).} Setting $x = y$ in \eqref{eq:cocycle-I} and using part~(i)
gives $\mathcal{S}_x(t) = \mathcal{S}_x(\lambda(x,x)t)$ for all $t\ge0$.
Since $\sigma(A_x)$ is discrete and unbounded, $A_x$ has an eigenvector
$\phi$ with eigenvalue $\alpha>0$; applying the identity to $\phi$ gives
$e^{-\alpha t}=e^{-\alpha\lambda(x,x)t}$ for all $t$, so
$\lambda(x,x)=1$. Next, composing \eqref{eq:cocycle-I} for $K_{x,y}$ and
$K_{y,x}$ and using $K_{x,y}K_{y,x}=I$ from part~(ii),
\begin{equation*}
\mathcal{S}_x(t)=K_{x,y}K_{y,x}\,\mathcal{S}_x(t)
=K_{x,y}\,\mathcal{S}_y(\lambda(y,x)t)\,K_{y,x}
=\mathcal{S}_x\!\left(\lambda(x,y)\lambda(y,x)t\right),
\end{equation*}
and the same eigenvector argument yields
$\lambda(x,y)\lambda(y,x) = 1$.
\end{proof}

\begin{proposition}[Generator intertwining]
\label{prop:generator}
Under Assumption~\ref{ass:R}, for every $x,y\in M$ and every
$h\in D$,
\begin{equation}
\label{eq:generator-intertwining}
K_{x,y}\,A_y\,h = \lambda(x,y)\,A_x\,K_{x,y}\,h.
\end{equation}
\end{proposition}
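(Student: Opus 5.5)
Fix $x,y\in M$ and write $K=K_{x,y}$, $\lambda=\lambda(x,y)$. The plan is to differentiate the intertwining relation \eqref{eq:cocycle-I} at $t=0^+$, applied to a vector $h\in D=\mathcal{D}(A_y)$. The proof splits into two steps. First, we show that $Kh$ lies in $D$ and identify the derivative of the right-hand side. Second, we compare the two one-sided derivatives.

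Let $h\in D$. For $t>0$, \eqref{eq:cocycle-I} gives
\begin{equation*}
\frac{\mathcal{S}_x(\lambda t)Kh - Kh}{t} = K\,\frac{\mathcal{S}_y(t)h - h}{t}.
\end{equation*}
Since $h\in\mathcal{D}(A_y)$, the strong derivative of $t\mapsto\mathcal{S}_y(t)h$ at $0^+$ exists and equals $-A_yh$. Because $K$ is bounded, the right-hand side therefore converges in norm to $-KA_yh$ as $t\to0^+$.

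Substituting $s=\lambda t$, the left-hand side becomes
\begin{equation*}
\lambda\,\frac{\mathcal{S}_x(s)Kh-Kh}{s}.
\end{equation*}
This quotient therefore converges as $s\to0^+$, and the limit is $-\lambda^{-1}KA_yh$.

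We now use the standard characterization of the generator of a $C_0$-semigroup: a vector $g$ lies in $\mathcal{D}(A_x)$ exactly when $\lim_{s\to0^+}s^{-1}(\mathcal{S}_x(s)g-g)$ exists, and the limit is $-A_xg$. Applied to $g=Kh$, this gives $Kh\in\mathcal{D}(A_x)=D$ together with $-A_xKh=-\lambda^{-1}KA_yh$. Multiplying by $-\lambda$ yields \eqref{eq:generator-intertwining}.

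The only point needing care is the domain statement $Kh\in\mathcal{D}(A_x)$, i.e.\ that $KD\subset D$. This comes for free from the generator characterization above, so no continuity in $x$ is used. Assumption~\ref{ass:R} enters only to identify $\mathcal{D}(A_x)$ with the common space $D$. Positivity of $\lambda$, guaranteed by Definition~\ref{def:cocycle}, is what justifies the change of variable $s=\lambda t$ and keeps the limit one-sided from the right. I expect no serious obstacle beyond stating these domain points explicitly.
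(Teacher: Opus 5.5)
Your proposal is correct and follows essentially the same route as the paper: divide the intertwining relation \eqref{eq:cocycle-I} by $t$, pass to the limit on the right using $h\in\mathcal{D}(A_y)$ and boundedness of $K_{x,y}$, and invoke the generator characterization to conclude $K_{x,y}h\in\mathcal{D}(A_x)$ together with the identity. The paper additionally applies the same argument to $K_{y,x}=K_{x,y}^{-1}$ to get $K_{x,y}\mathcal{D}(A_y)=\mathcal{D}(A_x)$, which is not needed for the statement itself but is used later in Corollary~\ref{cor:spectral-rigidity}.
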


\begin{proof}
Fix $x,y\in M$ and $h\in D$. By \eqref{eq:cocycle-I}, for $t>0$,
\begin{equation*}
\frac{\mathcal{S}_x(\lambda(x,y)t)\,K_{x,y}h-K_{x,y}h}{t}
= K_{x,y}\,\frac{\mathcal{S}_y(t)h-h}{t}.
\end{equation*}
Since $h\in\mathcal{D}(A_y)$ and $K_{x,y}$ is bounded, the right-hand
side converges to $-K_{x,y}A_yh$ as $t\downarrow0$. Hence the limit on
the left exists, and by the characterization of the generator of a
$C_0$-semigroup this means
$K_{x,y}h\in\mathcal{D}(A_x)$ with
$-\lambda(x,y)A_xK_{x,y}h=-K_{x,y}A_yh$, which is
\eqref{eq:generator-intertwining}. Note that membership
$K_{x,y}h\in\mathcal{D}(A_x)$ is a conclusion, not a hypothesis.
Applying the same argument to $K_{y,x}=K_{x,y}^{-1}$ yields
$K_{x,y}\,\mathcal{D}(A_y)=\mathcal{D}(A_x)$.
\end{proof}

\begin{proposition}[Multiplicativity of the scaling field]
\label{prop:multiplicativity}
For every $x,y,z\in M$,
\begin{equation}
\label{eq:multiplicativity}
\lambda(x,z) = \lambda(x,y)\,\lambda(y,z).
\end{equation}
\end{proposition}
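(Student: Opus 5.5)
The plan is to compare two ways of transporting the generator $A_z$ to the base point $x$: directly by $K_{x,z}$, and in two steps through $y$ via the cocycle identity~\eqref{eq:cocycle-C}. Each route produces a scalar multiple of $A_xK_{x,z}$. Subtracting the two expressions will force the scalars to agree, because $A_x$ has no kernel on the range of $K_{x,z}$.

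Fix $x,y,z\in M$ and $h\in D$.

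\emph{Direct route.} By Proposition~\ref{prop:generator} applied to the pair $(x,z)$,
\begin{equation*}
K_{x,z}A_zh=\lambda(x,z)A_xK_{x,z}h .
\end{equation*}

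\emph{Route through $y$.} By~\eqref{eq:cocycle-C} we have $K_{x,z}h=K_{x,y}K_{y,z}h$. Proposition~\ref{prop:generator} for the pair $(y,z)$ gives $K_{y,z}h\in D$ and $K_{y,z}A_zh=\lambda(y,z)A_yK_{y,z}h$. Since $K_{y,z}h\in D$, we may apply Proposition~\ref{prop:generator} once more, now to the pair $(x,y)$ and the vector $K_{y,z}h$:
\begin{equation*}
K_{x,z}A_zh=K_{x,y}K_{y,z}A_zh=\lambda(y,z)K_{x,y}A_yK_{y,z}h=\lambda(x,y)\lambda(y,z)A_xK_{x,y}K_{y,z}h .
\end{equation*}
By~\eqref{eq:cocycle-C}, the last expression equals $\lambda(x,y)\lambda(y,z)A_xK_{x,z}h$.

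\emph{Comparison.} Subtracting the two routes yields
\begin{equation*}
\bigl(\lambda(x,z)-\lambda(x,y)\lambda(y,z)\bigr)A_xK_{x,z}h=0\qquad\text{for all }h\in D .
\end{equation*}
It remains to exhibit a single $h\in D$ with $A_xK_{x,z}h\neq0$. This non-degeneracy step is the only point requiring care, and it is settled as follows. Choose an eigenvector $\psi$ of $A_x$ with eigenvalue $\alpha>0$; one exists because $\sigma(A_x)$ is discrete and unbounded, as in the proof of Proposition~\ref{prop:diagonal}. Then $\psi\in D$. By the last sentence of Proposition~\ref{prop:generator}, $K_{x,z}\mathcal{D}(A_z)=\mathcal{D}(A_x)$, so $h:=K_{z,x}\psi$ lies in $D$. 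For this $h$ we get $A_xK_{x,z}h=A_x\psi=\alpha\psi\neq0$, and the bracketed scalar must vanish, which is~\eqref{eq:multiplicativity}.

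\emph{Alternative route.} One can avoid the generator altogether by arguing at the semigroup level, exactly as in part~(iii) of Proposition~\ref{prop:diagonal}. Chaining~\eqref{eq:cocycle-I} gives
\begin{equation*}
K_{x,z}\mathcal{S}_z(t)=K_{x,y}K_{y,z}\mathcal{S}_z(t)=\mathcal{S}_x\bigl(\lambda(x,y)\lambda(y,z)t\bigr)K_{x,z},
\end{equation*}
while~\eqref{eq:cocycle-I} applied directly to $(x,z)$ gives $\mathcal{S}_x(\lambda(x,z)t)K_{x,z}$. Right-multiplying by $K_{z,x}=K_{x,z}^{-1}$ (Proposition~\ref{prop:diagonal}(ii)) yields
\begin{equation*}
\mathcal{S}_x(\lambda(x,z)t)=\mathcal{S}_x\bigl(\lambda(x,y)\lambda(y,z)t\bigr)\qquad\text{for all }t\ge0 .
\end{equation*}
Testing this identity on the eigenvector $\psi$ gives $e^{-\alpha\lambda(x,z)t}=e^{-\alpha\lambda(x,y)\lambda(y,z)t}$ for all $t$, and since $\alpha>0$ the exponents coincide. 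This route is cleaner, and I would present it as the main proof, with the generator computation above as the conceptual explanation the introduction refers to.
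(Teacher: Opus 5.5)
Your proposal is correct, and the semigroup route you designate as the main proof is exactly the paper's argument: chain \eqref{eq:cocycle-I} through \eqref{eq:cocycle-C}, cancel the invertible $K_{x,z}$, and test on an eigenvector of $A_x$ with positive eigenvalue. Your generator route is also sound, including the domain step $K_{z,x}\mathcal{D}(A_x)=\mathcal{D}(A_z)$ from Proposition~\ref{prop:generator}, and it is the chain described in Remark~\ref{rem:chain}, which the paper's own proof bypasses.
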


\begin{proof}
This is the key step; multiplicativity is not assumed but derived
from the semigroup structure. Apply the intertwining
relation~\eqref{eq:cocycle-I} twice:
\begin{align*}
K_{x,z}\,\mathcal{S}_z(t)
&= \mathcal{S}_x\!\left(\lambda(x,z)\,t\right)K_{x,z}.
\end{align*}
On the other hand, using \eqref{eq:cocycle-C} and then
\eqref{eq:cocycle-I} at $(y,z)$ and at $(x,y)$ in succession:
\begin{align*}
K_{x,z}\,\mathcal{S}_z(t)
&= K_{x,y}\,K_{y,z}\,\mathcal{S}_z(t) \\
&= K_{x,y}\,\mathcal{S}_y\!\left(\lambda(y,z)\,t\right)K_{y,z} \\
&= \mathcal{S}_x\!\left(\lambda(x,y)\lambda(y,z)\,t\right)K_{x,y}\,K_{y,z} \\
&= \mathcal{S}_x\!\left(\lambda(x,y)\lambda(y,z)\,t\right)K_{x,z}.
\end{align*}
Comparing the two expressions:
$\mathcal{S}_x\!\left(\lambda(x,z)\,t\right)K_{x,z}\,h
= \mathcal{S}_x\!\left(\lambda(x,y)\lambda(y,z)\,t\right)K_{x,z}\,h$
for all $h\in\mathcal{H}$ and all $t\ge 0$. By
Proposition~\ref{prop:diagonal}(ii), $K_{x,z}$ is boundedly
invertible, hence surjective, so
$\mathcal{S}_x(\lambda(x,z)t)=\mathcal{S}_x(\lambda(x,y)\lambda(y,z)t)$
on all of $\mathcal{H}$. Applying this identity to an eigenvector of
$A_x$ with eigenvalue $\alpha>0$ (which exists since $\sigma(A_x)$ is
discrete and unbounded) gives
$e^{-\alpha\lambda(x,z)t}=e^{-\alpha\lambda(x,y)\lambda(y,z)t}$ for all
$t\ge0$, hence $\lambda(x,z) = \lambda(x,y)\lambda(y,z)$.
\end{proof}

\section{Continuous Gauge Rigidity}
\label{sec:gauge}

The multiplicativity of $\lambda$ established in
Proposition~\ref{prop:multiplicativity} is the only input needed to
derive the gauge form: fix any base point, define $\tau$ by
restriction to that base point, and recover the full field by
multiplicativity. No topological input is required --- for a
two-point cocycle, defined on all pairs, both existence and
uniqueness of $\tau$ are purely algebraic.

\begin{theorem}[Continuous Gauge Rigidity]
\label{thm:continuous-gauge}
Let $\{A_x\}_{x\in M}$ be a continuous operator network and let
$\{K_{x,y},\lambda\}$ be a two-point intertwining cocycle. Then there
exists a continuous function $\tau\colon M\to\mathbb{R}_{>0}$ such that
\begin{equation}
  \label{eq:gauge}
  \lambda(x,y)=\frac{\tau(x)}{\tau(y)}
  \qquad\text{for all }x,y\in M.
\end{equation}
The function $\tau$ is unique up to a positive multiplicative constant.
\end{theorem}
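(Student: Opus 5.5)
The plan is to fix a base point and read $\tau$ off the scaling field there, then use multiplicativity to recover $\lambda$ everywhere. Fix $x_0\in M$ and set $\tau(x):=\lambda(x,x_0)$. Positivity of $\tau$ is immediate from $\lambda>0$. Continuity follows because $x\mapsto(x,x_0)$ is continuous into $M\times M$ and $\lambda$ is continuous there. No topology of $M$ (connectedness, simple connectivity) is needed at this stage. This is because the cocycle is defined on all pairs, so there is no path-dependence to control.

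Next we verify the gauge form. By Proposition~\ref{prop:multiplicativity} with $z=x_0$, we have $\lambda(x,x_0)=\lambda(x,y)\lambda(y,x_0)$, that is, $\tau(x)=\lambda(x,y)\tau(y)$. Dividing by $\tau(y)>0$ yields \eqref{eq:gauge}. As a consistency check, $\tau(x_0)=\lambda(x_0,x_0)=1$ by Proposition~\ref{prop:diagonal}(iii), and the relation $\lambda(x,y)\lambda(y,x)=1$ is automatically compatible with the quotient form.

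For uniqueness, suppose $\tau'$ is another positive function with $\lambda(x,y)=\tau'(x)/\tau'(y)$. Setting $y=x_0$ gives $\tau'(x)=\tau'(x_0)\lambda(x,x_0)=c\,\tau(x)$ with $c=\tau'(x_0)>0$. Conversely, any positive constant multiple of $\tau$ clearly satisfies \eqref{eq:gauge}. Uniqueness therefore does not even require continuity of the competitor.

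There is no real obstacle left at this point. The substantive input, multiplicativity, has already been extracted from the semigroup intertwining in Proposition~\ref{prop:multiplicativity} via the positive-eigenvalue argument. The only point needing care is to note explicitly that continuity of $\tau$ is inherited from the continuity of $\lambda$ in its first variable, and not produced by any limiting argument. If one also wants the isospectrality claim of Theorem~\ref{thm:intro-gauge}, it follows by combining \eqref{eq:gauge} with Proposition~\ref{prop:generator}. That proposition gives $K_{x,y}(\tau(y)A_y)=(\tau(x)A_x)K_{x,y}$ on $D$, with $K_{x,y}D=D$ boundedly invertible, so $\tau(x)A_x$ and $\tau(y)A_y$ are similar and have the same spectrum.
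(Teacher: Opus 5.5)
Your proposal is correct and matches the paper's proof essentially step for step. The paper also defines $\tau(x):=\lambda(x,x_0)$, gets the gauge form by applying Proposition~\ref{prop:multiplicativity} to the triple $(x,y,x_0)$, and proves uniqueness by showing that $\tau'/\tau$ is constant; your substitution $y=x_0$ is the same observation.
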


\begin{proof}
Fix a base point $x_0\in M$ and define $\tau\colon M\to\mathbb{R}_{>0}$
by $\tau(x) := \lambda(x,x_0)$. This is positive by definition and
continuous because $\lambda$ is continuous.

\emph{Gauge form.} For any $x,y\in M$, apply
Proposition~\ref{prop:multiplicativity} with the triple $(x,y,x_0)$:
\begin{equation*}
\lambda(x,x_0) = \lambda(x,y)\,\lambda(y,x_0),
\end{equation*}
i.e.\ $\tau(x) = \lambda(x,y)\,\tau(y)$. Since $\tau(y)>0$ this gives
$\lambda(x,y) = \tau(x)/\tau(y)$.

\emph{Uniqueness.} Suppose $\tau'\colon M\to\mathbb{R}_{>0}$ also
satisfies $\lambda(x,y)=\tau'(x)/\tau'(y)$. Then
$\tau'(x)/\tau'(y) = \tau(x)/\tau(y)$, i.e.\
$(\tau'/\tau)(x) = (\tau'/\tau)(y)$, for \emph{all} pairs
$x,y\in M$; hence $\tau'/\tau$ is constant. Connectedness of $M$
plays no role here.
\end{proof}

\begin{corollary}[Continuous spectral rigidity]
\label{cor:spectral-rigidity}
Under the hypotheses of Theorem~\ref{thm:continuous-gauge},
\begin{equation}
\label{eq:spectral-rigidity}
\sigma\!\left(\tau(x)A_x\right) = \sigma\!\left(\tau(y)A_y\right)
\qquad\text{for all }x,y\in M.
\end{equation}
That is, all renormalized generators $\tau(x)A_x$ belong to a common
isospectral class.
\end{corollary}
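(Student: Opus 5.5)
The plan is to show that $K_{x,y}$ is a similarity transforming $\tau(y)A_y$ into $\tau(x)A_x$. Equality of spectra then follows from the fact that both operators have compact resolvent, so their spectra consist of eigenvalues only.

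\emph{Step 1: the similarity.} By Proposition~\ref{prop:generator}, for every $h\in D=\mathcal{D}(A_y)$ we have $K_{x,y}A_yh=\lambda(x,y)A_xK_{x,y}h$, and $K_{x,y}\mathcal{D}(A_y)=\mathcal{D}(A_x)$. Theorem~\ref{thm:continuous-gauge} gives $\lambda(x,y)=\tau(x)/\tau(y)$. Multiplying by $\tau(y)>0$ yields
\begin{equation*}
K_{x,y}\,\bigl(\tau(y)A_y\bigr)\,h=\bigl(\tau(x)A_x\bigr)\,K_{x,y}\,h,\qquad h\in\mathcal{D}(A_y).
\end{equation*}
By Proposition~\ref{prop:diagonal}(ii), $K_{x,y}$ is boundedly invertible with inverse $K_{y,x}$. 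Combined with the domain identity, this gives the operator equality
\begin{equation*}
\tau(x)A_x=K_{x,y}\,\tau(y)A_y\,K_{x,y}^{-1}
\end{equation*}
on $\mathcal{D}(A_x)$.

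\emph{Step 2: transfer of eigenvalues.} Let $\mu\in\sigma(\tau(y)A_y)$. Since the spectrum is discrete, $\mu$ is an eigenvalue with some eigenvector $\phi\neq0$. Then $K_{x,y}\phi$ lies in $\mathcal{D}(A_x)$, and it is nonzero because $K_{x,y}$ is injective. By Step 1,
\begin{equation*}
\tau(x)A_xK_{x,y}\phi=\mu K_{x,y}\phi,
\end{equation*}
so $\mu\in\sigma(\tau(x)A_x)$. Exchanging the roles of $x$ and $y$ and using $K_{y,x}$ gives the reverse inclusion. Alternatively, one can conjugate resolvents directly: $(\tau(x)A_x-\mu)^{-1}=K_{x,y}(\tau(y)A_y-\mu)^{-1}K_{y,x}$, which shows that the two resolvent sets coincide.

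\emph{Main point of care.} The only delicate step is the domain bookkeeping, namely that the generator relation holds as an operator identity rather than merely on $D$. This is already supplied by the final statement of Proposition~\ref{prop:generator}, $K_{x,y}\mathcal{D}(A_y)=\mathcal{D}(A_x)$. The same argument also shows that multiplicities agree, since $K_{x,y}$ maps eigenspaces bijectively onto eigenspaces.
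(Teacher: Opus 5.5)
Your proposal is correct and follows essentially the same route as the paper. Both arguments combine Proposition~\ref{prop:generator}, Proposition~\ref{prop:diagonal}(ii) and the gauge form $\lambda(x,y)=\tau(x)/\tau(y)$ to show that $K_{x,y}$ is a boundedly invertible similarity, with $K_{x,y}\mathcal{D}(A_y)=\mathcal{D}(A_x)$, between $\tau(y)A_y$ and $\tau(x)A_x$. Your Step~2 (eigenvalue transfer, or resolvent conjugation) just spells out the standard fact that the paper cites when it says similar operators have equal spectra.
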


\begin{proof}
By Proposition~\ref{prop:diagonal}(ii), $K_{x,y}$ is boundedly
invertible. By Proposition~\ref{prop:generator},
$K_{x,y}A_y = \lambda(x,y)A_xK_{x,y}$ on $D$. Multiplying both sides
by $\tau(y)$ and using $\lambda(x,y)=\tau(x)/\tau(y)$:
\begin{equation*}
K_{x,y}\,(\tau(y)A_y) = \tau(x)A_x\,K_{x,y}.
\end{equation*}
Thus $\tau(y)A_y$ and $\tau(x)A_x$ are similar via the boundedly
invertible operator $K_{x,y}$; by
Proposition~\ref{prop:generator} the similarity also matches the
domains, $K_{x,y}\mathcal{D}(A_y)=\mathcal{D}(A_x)$, so the two
unbounded operators are similar in the precise sense and have equal
spectra.
\end{proof}

\begin{remark}[The logical chain]
\label{rem:chain}
The gauge form \eqref{eq:gauge} is algebraically elementary once
multiplicativity is in hand. The non-trivial content is that
multiplicativity is forced by the semigroup structure rather
than imposed as a hypothesis. The full logical chain is:
\begin{align*}
&\text{semigroup intertwining } \eqref{eq:cocycle-I}\\
&\Rightarrow\quad
K_{x,y}A_y = \lambda(x,y)A_xK_{x,y}
\quad\text{(Proposition~\ref{prop:generator})}\\
&\Rightarrow\quad
\lambda(x,z)=\lambda(x,y)\lambda(y,z)
\quad\text{(Proposition~\ref{prop:multiplicativity})}\\
&\Rightarrow\quad
\lambda(x,y)=\tau(x)/\tau(y)
\quad\text{(Theorem~\ref{thm:continuous-gauge}).}
\end{align*}
Removing any step breaks the argument.
\end{remark}

\section{Path Cocycles and the Operator Connection}
\label{sec:connection}

Theorem~\ref{thm:continuous-gauge} and its corollary are obtained under
Assumption~\ref{ass:R} alone, for transport indexed by pairs of points.
A two-point cocycle satisfying \eqref{eq:cocycle-C} is, by construction,
path-independent: the transport from $y$ to $x$ does not remember how
the parameter moved. For flat regular cocycles over a simply connected
base this is the general situation
(Theorem~\ref{thm:flatness}); in general, however, the two-point
format excludes precisely the phenomena we wish to study ---
curvature, and monodromy around non-contractible loops. Indeed, for a two-point cocycle every loop
transport equals $K_{x_0,x_0}=I$ by
Proposition~\ref{prop:diagonal}\emph{(i)}. The natural general object
indexes transport by paths.

\begin{definition}[Path intertwining cocycle]\label{def:path-cocycle}
A \emph{path intertwining cocycle} on the operator network
$\{A_x\}_{x\in M}$ assigns to every piecewise smooth path
$\gamma\colon[0,1]\to M$ a bounded injective operator
$K_\gamma\in\mathcal{B}(\mathcal{H})$ and a number $\lambda(\gamma)>0$
such that for all admissible paths and all $t\ge0$:
\begin{enumerate}
\item[\emph{(P1)}] $K_\gamma$ is invariant under orientation-preserving
reparametrization; $K_\gamma=I$ for constant $\gamma$; and
$K_{\bar\gamma}=K_\gamma^{-1}$, where $\bar\gamma$ denotes the reversed
path.
\item[\emph{(P2)}] $K_{\gamma_2\cdot\gamma_1}=K_{\gamma_2}K_{\gamma_1}$
whenever the concatenation $\gamma_2\cdot\gamma_1$ is defined.
\item[\emph{(P3)}] $K_\gamma\,\mathcal{S}_{\gamma(0)}(t)
=\mathcal{S}_{\gamma(1)}\!\left(\lambda(\gamma)\,t\right)K_\gamma$.
\item[\emph{(P4)}] $s\mapsto K_{\gamma|_{[0,s]}}\,h$ is continuous in
$\mathcal{H}$ for every $h\in\mathcal{H}$.
\end{enumerate}
\end{definition}

\begin{remark}[Relation to two-point cocycles]\label{rem:two-point}
If $\{K_{x,y},\lambda\}$ is a cocycle in the sense of
Definition~\ref{def:cocycle}, then $K_\gamma:=K_{\gamma(1),\gamma(0)}$
and $\lambda(\gamma):=\lambda(\gamma(1),\gamma(0))$ define a path
cocycle whose transport depends only on endpoints; conversely, a path
cocycle with path-independent transport descends to a two-point
cocycle. Definition~\ref{def:path-cocycle} therefore strictly
generalizes Definition~\ref{def:cocycle}, and the generalization is
exactly what admits holonomy. All results of
Sections~\ref{sec:cocycles}--\ref{sec:gauge} extend verbatim to path
cocycles: their proofs use only \emph{(P1)}--\emph{(P3)} applied along
fixed paths. In particular, differentiating \emph{(P3)} at $t=0$ as in
Proposition~\ref{prop:generator} gives the generator transport
$K_\gamma A_{\gamma(0)}h=\lambda(\gamma)A_{\gamma(1)}K_\gamma h$ for
$h\in D$.
\end{remark}

The first structural fact about path cocycles is that the scalar layer
never acquires monodromy: the scaling field cannot detect the topology
of the base.

\begin{lemma}[No scalar monodromy]\label{lem:no-scalar-monodromy}
Let $\{K_\gamma,\lambda\}$ be a path intertwining cocycle. Then
$\lambda(\gamma)=1$ for every closed loop $\gamma$, and
$\lambda(\gamma)$ depends only on the endpoints of $\gamma$.
Consequently there exists $\tau\colon M\to\mathbb{R}_{>0}$, unique up to
a positive multiplicative constant, with
\begin{equation}\label{eq:path-gauge}
\lambda(\gamma)=\frac{\tau(\gamma(1))}{\tau(\gamma(0))}
\qquad\text{for every piecewise smooth path }\gamma.
\end{equation}
\end{lemma}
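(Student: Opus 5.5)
The plan is to reduce everything to a statement about a single generator, namely that an operator commuting with a semigroup in a time-rescaled way forces the rescaling to be trivial. Spectral rigidity does the work. Let $\gamma$ be a closed loop at $x_0$. Then (P3) gives $K_\gamma\mathcal{S}_{x_0}(t)=\mathcal{S}_{x_0}(\lambda(\gamma)t)K_\gamma$. By (P1), $K_\gamma$ is boundedly invertible with inverse $K_{\bar\gamma}$. Differentiating at $t=0$, exactly as in Proposition~\ref{prop:generator} (see Remark~\ref{rem:two-point}), we get $K_\gamma A_{x_0}K_\gamma^{-1}=\lambda(\gamma)A_{x_0}$, with domains matched. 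Hence $\sigma(A_{x_0})=\lambda(\gamma)\sigma(A_{x_0})$, as sets and with multiplicities, since similarity preserves the dimensions of eigenspaces.

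Write $\lambda=\lambda(\gamma)$ and suppose $\lambda\neq1$. Replacing $\gamma$ by $\bar\gamma$ if needed, we may assume $\lambda<1$; this is legitimate because the same relation for $K_{\bar\gamma}=K_\gamma^{-1}$ gives the factor $\lambda^{-1}$. Take the smallest strictly positive eigenvalue $\alpha$ of $A_{x_0}$. It exists because the spectrum is discrete, bounded below by $0$ and unbounded. Then $\lambda\alpha$ is also a strictly positive eigenvalue, and it is smaller than $\alpha$, which is a contradiction. A cleaner alternative is to count eigenvalues: the function $N(E)=\#\{\text{eigenvalues}\le E\}$ would satisfy $N(E)=N(E/\lambda)$ for all $E$, which is impossible since $N$ is finite and tends to infinity. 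Either way $\lambda(\gamma)=1$.

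Path-independence comes next. Let $\gamma_1,\gamma_2$ run from $y$ to $x$. The loop $\bar\gamma_2\cdot\gamma_1$ based at $y$ has $\lambda=1$ by the previous step. It remains to see that $\lambda$ is multiplicative under concatenation and inverts under reversal. This is the argument of Propositions~\ref{prop:diagonal} and \ref{prop:multiplicativity}: compose (P3) along $\gamma_1$ and $\gamma_2$ using (P2), cancel the invertible $K$, and test the resulting identity on an eigenvector with positive eigenvalue. This yields $\lambda(\gamma_2\cdot\gamma_1)=\lambda(\gamma_2)\lambda(\gamma_1)$ and $\lambda(\bar\gamma)=\lambda(\gamma)^{-1}$. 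Therefore $\lambda(\gamma_1)=\lambda(\gamma_2)$.

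For the gauge function, fix $x_0$. Since $M$ is a connected manifold it is path-connected by piecewise smooth paths. Define $\tau(x):=\lambda(\gamma)$ for any path $\gamma$ from $x_0$ to $x$; this is well-defined by the previous step. For an arbitrary path $\gamma$ from $y$ to $x$, concatenate it with a path from $x_0$ to $y$; multiplicativity then gives $\tau(x)=\lambda(\gamma)\tau(y)$, which is \eqref{eq:path-gauge}. For uniqueness, if $\tau'$ also satisfies \eqref{eq:path-gauge}, then $\tau'/\tau$ takes equal values at the two endpoints of every path, so it is constant because $M$ is path-connected.

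The main obstacle is the first step: ruling out a nontrivial self-similarity $\sigma(A)=\lambda\sigma(A)$. That relation is entirely possible for general positive operators, for instance those with spectrum $\{\lambda^n\}_{n\in\mathbb{Z}}$. It is excluded here only because the spectrum is discrete, with finite multiplicities, and accumulates only at $+\infty$, with a smallest positive eigenvalue. The minimal-positive-eigenvalue argument handles this directly.
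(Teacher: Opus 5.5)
Your proposal is correct and follows essentially the same route as the paper: multiplicativity of $\lambda$ under concatenation via (P2)--(P3) and a positive eigenvector, the self-similarity $\sigma(A_{x_0})=\lambda(\gamma)\,\sigma(A_{x_0})$ ruled out by the smallest strictly positive eigenvalue, endpoint dependence via the closed loop $\bar\gamma_2\cdot\gamma_1$, and $\tau$ defined by transport from a base point. The differences are cosmetic: you reduce to $\lambda<1$ using $K_\gamma^{-1}$ instead of treating $\lambda>1$ and $\lambda<1$ separately, and you state explicitly that $M$ is path-connected by piecewise smooth paths, which is needed both to define $\tau$ on all of $M$ and to prove its uniqueness and which the paper leaves implicit.
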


\begin{proof}
Composing \emph{(P3)} for $\gamma_1$ and $\gamma_2$ and comparing with
\emph{(P3)} for the concatenation, an application to an eigenvector
with positive eigenvalue (as in
Proposition~\ref{prop:multiplicativity}) gives
$\lambda(\gamma_2\cdot\gamma_1)=\lambda(\gamma_2)\lambda(\gamma_1)$.
For a closed loop $\gamma$ at $x_0$, \emph{(P1)} makes $K_\gamma$
invertible, and \emph{(P3)} gives
$K_\gamma\,\mathcal{S}_{x_0}(t)\,K_\gamma^{-1}
=\mathcal{S}_{x_0}(\lambda(\gamma)t)$, hence
$\sigma(A_{x_0})=\lambda(\gamma)\,\sigma(A_{x_0})$ as sets. The set
$\sigma(A_{x_0})$ is discrete and possesses a smallest strictly
positive element $\alpha_\ast$ (compact resolvent,
$\sigma(A_{x_0})\nearrow\infty$). Write $\lambda=\lambda(\gamma)$.
If $\lambda>1$, then $\alpha_\ast\in\sigma(A_{x_0})
=\lambda\,\sigma(A_{x_0})$ gives
$\alpha_\ast/\lambda\in\sigma(A_{x_0})$, a strictly smaller positive
element --- contradiction. If $\lambda<1$, then
$\lambda\alpha_\ast\in\lambda\,\sigma(A_{x_0})=\sigma(A_{x_0})$ is a
strictly smaller positive element --- contradiction. Hence
$\lambda(\gamma)=1$. If
$\gamma_1,\gamma_2$ share endpoints, then
$\bar\gamma_2\cdot\gamma_1$ is a closed loop, so
$\lambda(\gamma_1)/\lambda(\gamma_2)
=\lambda(\bar\gamma_2\cdot\gamma_1)=1$: endpoint dependence follows.
The gauge form \eqref{eq:path-gauge} and uniqueness are then obtained
exactly as in Theorem~\ref{thm:continuous-gauge}, defining $\tau$
through transport from a base point along arbitrary paths (any two
choices give the same value).
\end{proof}

\begin{remark}[Two layers]\label{rem:two-layers}
Lemma~\ref{lem:no-scalar-monodromy} splits the theory into two layers
with opposite behavior: the time-scaling layer is topologically inert
--- pure gauge on any base --- while all geometric information is
carried by the operator layer $\{K_\gamma\}$, whose loop values are
constrained in Section~\ref{sec:holonomy}. The passage from loop
triviality to the coboundary form is the elementary step, as in the
cocycle rigidity of Liv\v{s}ic theory in hyperbolic
dynamics~\cite{livsic1972}; the content here is that loop triviality
itself is not assumed as periodic data but forced by the spectrum.
\end{remark}

\begin{remark}[Physical reading: the second clock effect]
\label{rem:second-clock}
Lemma~\ref{lem:no-scalar-monodromy} admits a physical reading. In
Weyl's scale-gauge theory of 1918~\cite{weyl1918}, the rate of a clock
transported around a closed loop could depend on the loop --- the
\emph{second clock effect} --- and Einstein's objection, appended to
Weyl's paper, was that atoms with different histories would then
display shifted spectral lines, contradicting their observed
sharpness. Within the present model class the lemma turns this
objection into a theorem: for networks of semigroup dynamics with
discrete spectra bounded below, a path-dependent time rate is not
merely unobserved but impossible, and the mechanism is precisely the
existence of a smallest positive spectral level. A detailed
discussion, including the unitary case and the failure of rigidity
for scale-invariant spectra without a spectral bottom, is deferred to
a companion note.
\end{remark}

To pass from parallel transport to an infinitesimal connection --- a
$\mathcal{B}(\mathcal{H})$-valued $1$-form on $M$ --- more than
differentiability along individual curves is needed: a path cocycle
satisfying \emph{(P1)}--\emph{(P4)} may have transport whose first
variation depends on higher-order data of the path (for instance,
$K_\gamma=e^{i\int_\gamma\kappa\,ds}I$ on a surface, with $\kappa$
the signed geodesic curvature, satisfies all four axioms, yet its
derivative along $\gamma$ involves the second-order jet). We
therefore take the existence of a connection form as part of the
regularity, in the spirit of transport functors.

\begin{assumption}[Regular cocycle]\label{ass:D}
There exists a $1$-form $\omega$ on $M$ with values in
$\mathcal{B}(\mathcal{H})$, linear in the tangent vector and
norm-continuous in the base point, such that for every smooth curve
$\gamma\colon[0,1]\to M$ and every $h\in\mathcal{H}$ the map
$P(s):=K_{\gamma|_{[0,s]}}$ satisfies
\begin{equation}
\label{eq:pt-ode}
\frac{d}{ds}P(s)\,h = -\omega_{\gamma(s)}\!\left(\gamma'(s)\right)
P(s)\,h,
\qquad P(0)=I.
\end{equation}
A path cocycle admitting such an $\omega$ is called \emph{regular},
and $\omega$ is its \emph{connection $1$-form}.
\end{assumption}

\begin{remark}[Uniqueness and reconstruction]
\label{def:connection}
The form $\omega$ is uniquely determined by the cocycle:
evaluating \eqref{eq:pt-ode} at $s=0$ gives
$\omega_x(v)h=-\frac{d}{ds}\big|_{0}K_{\gamma|_{[0,s]}}h$ for any
curve with $\gamma(0)=x$, $\gamma'(0)=v$. Conversely, the cocycle is
recovered from $\omega$: since $\omega$ is norm-continuous, the
initial-value problem \eqref{eq:pt-ode} has a unique solution given
by the norm-convergent Dyson series, and by the Gronwall inequality
$K_\gamma=P(1)$ is the unique bounded solution. In particular
$\|\omega_x(v)\|\le C_K\|v\|$ uniformly on compact subsets of $M$.
Regularity holds, for example, whenever the cocycle is generated by
conjugation along the flow of a norm-continuous family of bounded
skew-adjoint generators, as in the example of
Section~\ref{sec:example}; unbounded transport generators (such as
rotations on $L^2(\mathbb{R}^2)$, whose generator is the angular
momentum operator) fall outside this bounded regularity class and
are discussed separately in Remark~\ref{rem:oscillator}.
\end{remark}

\begin{proposition}[Parallel transport from the connection]
\label{prop:parallel-transport}
Under Assumption~\ref{ass:D}, for any smooth curve
$\gamma\colon[0,1]\to M$ the transport operator $K_\gamma$ is the
unique bounded solution of \eqref{eq:pt-ode} evaluated at $s=1$; in
particular the transport is uniquely determined by $\omega$ and the
homotopy-theoretic properties of the cocycle are encoded in
$\omega$.
\end{proposition}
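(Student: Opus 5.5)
The statement splits into two claims. The first is that $K_\gamma$ equals $P(1)$, where $P$ is the unique bounded solution of \eqref{eq:pt-ode} along $\gamma$. The second is that $K_\gamma$ is therefore a functional of $\omega$ and $\gamma$ alone. I will prove existence, uniqueness and identification in turn, and then read off the encoding of homotopy-theoretic data.

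\emph{Existence.} Set $B(s):=\omega_{\gamma(s)}(\gamma'(s))$. This is norm-continuous in $s$, because $\omega$ is norm-continuous in the base point, linear in the tangent vector, and $\gamma$ is $C^1$. On the compact interval $[0,1]$ we therefore have $\sup_s\|B(s)\|=:b<\infty$. The Dyson series
\[
U(s)=I+\sum_{n\ge1}(-1)^n\int_{0\le s_n\le\cdots\le s_1\le s}B(s_1)\cdots B(s_n)\,ds_n\cdots ds_1
\]
has $n$-th term bounded by $b^ns^n/n!$, so it converges in operator norm. Termwise differentiation gives $U'(s)=-B(s)U(s)$ with $U(0)=I$, with the equation holding in norm and hence strongly.

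\emph{Uniqueness among strongly differentiable bounded families.} Suppose $Q(s)$ is strongly continuous, and for each $h$ the map $s\mapsto Q(s)h$ is differentiable with $\frac{d}{ds}Q(s)h=-B(s)Q(s)h$ and $Q(0)h=h$. Integrating gives
\[
Q(s)h=h-\int_0^sB(r)Q(r)h\,dr.
\]
For $D(s):=(Q(s)-U(s))h$ this yields $\|D(s)\|\le b\int_0^s\|D(r)\|\,dr$, so Gronwall's inequality forces $D\equiv0$. The integral form needs continuity of $r\mapsto B(r)Q(r)h$. This follows because $B$ is norm-continuous and $Q(\cdot)h$ is continuous by \emph{(P4)}. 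The local boundedness of $Q$ needed in the estimate comes from the uniform boundedness principle applied to the strongly continuous family $Q(s)$ on $[0,1]$.

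\emph{Identification.} By Assumption~\ref{ass:D}, $P(s)=K_{\gamma|_{[0,s]}}$ is such a family, so $P=U$ and $K_\gamma=P(1)=U(1)$. Reparametrization invariance \emph{(P1)} guarantees that $\gamma|_{[0,s]}$, rescaled to $[0,1]$, is the object to which the cocycle assigns $P(s)$. For a piecewise smooth $\gamma$, I apply this on each smooth piece and multiply using \emph{(P2)}, giving an ordered product of Dyson series. Every loop value $K_\gamma$ is then a function of $\omega$ along $\gamma$, so any holonomy or homotopy invariant of the cocycle is encoded in $\omega$. I read the "homotopy-theoretic" clause as exactly this consequence and will not prove any flatness statement here.

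\emph{Main obstacle.} The delicate point is that \eqref{eq:pt-ode} is only a strong derivative, so the uniqueness step must be run vectorwise through the integral equation rather than in operator norm. That in turn requires the continuity of $s\mapsto B(s)P(s)h$ obtained above. Once this is in place, the rest is the standard Gronwall and Dyson argument.
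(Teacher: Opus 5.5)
Your proposal is correct and follows the paper's argument. Existence of a solution is Assumption~\ref{ass:D} itself, and uniqueness is a vectorwise Gronwall estimate using the uniform bound on $\omega_{\gamma(s)}(\gamma'(s))$; the Dyson series (which the paper invokes in Remark~\ref{def:connection}) additionally gives an explicit formula for $K_\gamma$ in terms of $\omega$, although the appeal to the uniform boundedness principle is unnecessary, since for fixed $h$ the function $s\mapsto\|D(s)\|$ is already continuous.
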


\begin{proof}
Existence is Assumption~\ref{ass:D} itself; uniqueness follows from
the Gronwall inequality in the strong topology, using the uniform
bound $\|\omega_{\gamma(s)}(\gamma'(s))\|\le C\|\gamma'(s)\|$ of
Remark~\ref{def:connection}.
\end{proof}

\section{Curvature and Holonomy}
\label{sec:holonomy}

In the discrete setting of the companion paper, the cocycle identity
forces all cycle products $K_{i_1 i_2}K_{i_2 i_3}\cdots K_{i_k i_1}$
to equal the identity: there are no non-contractible loops and hence no
holonomy. In the continuous setting this is no longer automatic. A
regular path intertwining cocycle defines a connection on
$\mathcal{H}\times M$ (Section~\ref{sec:connection}); the connection
may fail to be flat, or may be flat with nontrivial monodromy, and in
either case a closed loop $\gamma$ in $M$ carries a holonomy operator
$K_\gamma\in\mathcal{B}(\mathcal{H})$ that need not be the identity.
We define these objects, constrain them, and characterize flatness in
terms of parallel transport.

\begin{definition}[Holonomy]
\label{def:holonomy}
Let $x_0\in M$. For a piecewise smooth closed loop
$\gamma\colon[0,1]\to M$ with $\gamma(0)=\gamma(1)=x_0$, the
\emph{holonomy operator} of a path intertwining cocycle along
$\gamma$ is the transport operator $K_\gamma\in\mathcal{B}(\mathcal{H})$
of Definition~\ref{def:path-cocycle}. By \emph{(P1)}, $K_\gamma$ is
boundedly invertible, with $K_\gamma^{-1}=K_{\bar\gamma}$ where
$\bar\gamma$ denotes $\gamma$ traversed in reverse. The
\emph{holonomy group} at $x_0$ is
$\mathrm{Hol}(x_0) := \{K_\gamma : \gamma \text{ closed at }x_0\}
\subset\mathcal{B}(\mathcal{H})^\times$.
\end{definition}

\begin{proposition}[Holonomy is spectrally block-diagonal]
\label{prop:commutant}
Let $\gamma$ be a piecewise smooth closed loop at $x_0$. Then
$\lambda(\gamma)=1$ and
\begin{equation}
\label{eq:holonomy-commutes}
K_\gamma\,\mathcal{S}_{x_0}(t)=\mathcal{S}_{x_0}(t)\,K_\gamma
\qquad(t\ge0).
\end{equation}
Consequently $K_\gamma$ commutes with every spectral projection of
$A_{x_0}$ and restricts to a bijection of each eigenspace
$E_\alpha(x_0)=\ker(A_{x_0}-\alpha I)$. In particular
$\mathrm{Hol}(x_0)$ is contained in the group of boundedly invertible
elements of the commutant $\{A_{x_0}\}'$.
\end{proposition}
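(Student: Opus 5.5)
The first claim, $\lambda(\gamma)=1$, is exactly Lemma~\ref{lem:no-scalar-monodromy}, so I would cite it directly. Substituting $\lambda(\gamma)=1$ and $\gamma(0)=\gamma(1)=x_0$ into \emph{(P3)} gives $K_\gamma\,\mathcal{S}_{x_0}(t)=\mathcal{S}_{x_0}(t)\,K_\gamma$ for all $t\ge0$. This is \eqref{eq:holonomy-commutes}. Bounded invertibility of $K_\gamma$ comes from \emph{(P1)}, with $K_\gamma^{-1}=K_{\bar\gamma}$. Since $\bar\gamma$ is also a closed loop at $x_0$, its inverse commutes with the semigroup as well.

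Next I would pass from the semigroup to the spectral projections. Because $A_{x_0}$ has compact resolvent, it has an orthonormal eigenbasis, and $\mathcal{S}_{x_0}(t)=\sum_\alpha e^{-t\alpha}P_\alpha$, where $P_\alpha$ is the orthogonal projection onto $E_\alpha(x_0)$. I would obtain $P_\alpha$ from the single compact self-adjoint operator $T:=\mathcal{S}_{x_0}(1)$. The eigenvalues $e^{-\alpha}$ of $T$ are distinct for distinct $\alpha$ and accumulate only at $0$. Hence $P_\alpha$ is the Riesz projection of $T$ at the isolated point $e^{-\alpha}$:
\[
P_\alpha=\frac{1}{2\pi i}\oint_{|z-e^{-\alpha}|=\varepsilon}(z-T)^{-1}\,dz,
\]
where $\varepsilon$ is small enough that the circle encloses no other eigenvalue of $T$. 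Since $K_\gamma$ commutes with $T$, it commutes with every resolvent $(z-T)^{-1}$, and therefore with $P_\alpha$. Equivalently, one can note that $P_\alpha$ is a norm limit of polynomials in $T$. In either form, this Riesz-projection step is the one place that needs care. It is harmless here because the spectrum of $T$ is discrete away from $0$ and each $\alpha$ gives an isolated point.

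From $K_\gamma P_\alpha=P_\alpha K_\gamma$ it follows that $K_\gamma E_\alpha(x_0)\subset E_\alpha(x_0)$. The same argument applied to $K_\gamma^{-1}=K_{\bar\gamma}$ gives the reverse inclusion, so $K_\gamma$ restricts to a bijection of each eigenspace $E_\alpha(x_0)$.

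Finally, membership in the commutant $\{A_{x_0}\}'$ means that $K_\gamma$ commutes with all spectral projections of $A_{x_0}$. Equivalently, $K_\gamma D\subset D$ and $K_\gamma A_{x_0}=A_{x_0}K_\gamma$ on $D$. This already follows from the generator transport in Remark~\ref{rem:two-point} with $\lambda(\gamma)=1$, and it is also consistent with the block-diagonal form obtained above. Combined with the invertibility of $K_\gamma$, this places $\mathrm{Hol}(x_0)$ inside the group of boundedly invertible elements of $\{A_{x_0}\}'$.
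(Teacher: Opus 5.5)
Your proof is correct and follows the paper's argument: Lemma~\ref{lem:no-scalar-monodromy} together with \emph{(P3)} gives the commutation, which is then transferred to the spectral projections, and $K_{\bar\gamma}=K_\gamma^{-1}$ supplies bijectivity on each eigenspace. The only variation is the transfer step: you realize $P_\alpha$ as a Riesz projection of the compact operator $\mathcal{S}_{x_0}(1)$, whereas the paper passes to the resolvent $(A_{x_0}+\mu)^{-1}=\int_0^\infty e^{-\mu t}\mathcal{S}_{x_0}(t)\,dt$ and invokes the spectral theorem, and both routes work equally well.
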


\begin{proof}
$\lambda(\gamma)=1$ is Lemma~\ref{lem:no-scalar-monodromy}, so
\emph{(P3)} reads \eqref{eq:holonomy-commutes}. Commutation with the
semigroup implies commutation with the resolvent
$(A_{x_0}+\mu)^{-1}=\int_0^\infty e^{-\mu t}\,\mathcal{S}_{x_0}(t)\,dt$
for $\mu>0$, hence with every spectral projection of $A_{x_0}$ by the
spectral theorem. Each eigenspace is therefore invariant under
$K_\gamma$ and under $K_\gamma^{-1}=K_{\bar\gamma}$, so the
restriction to it is bijective.
\end{proof}

\begin{assumption}[Regularity of the connection form]\label{ass:Dprime}
The connection form $\omega$ of Assumption~\ref{ass:D} is
continuous on $M$ and admits continuous directional derivatives
$\partial_u\omega_x(v)$ for all $u,v\in T_xM$.
\end{assumption}

\begin{proposition}[Existence of the curvature limit]
\label{prop:curvature-exists}
Under Assumptions~\ref{ass:D} and~\ref{ass:Dprime}, for every $x\in M$
and every $u,v\in T_xM$, the limit
\begin{equation}
\label{eq:curvature-limit}
\lim_{\varepsilon\to 0}\frac{1}{\varepsilon^2}
\left(K_{\partial\Pi_\varepsilon} - I\right)h
\end{equation}
exists in $\mathcal{H}$ for every $h\in\mathcal{H}$, uniformly in
$h$ on bounded sets, and equals $-F_x(u,v)\,h$, where
$F_x(u,v)\colon\mathcal{H}\to\mathcal{H}$ is a bounded operator,
skew-symmetric in $u,v$.
\end{proposition}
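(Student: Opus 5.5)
We work in a coordinate chart $\phi$ around $x$ with $\phi(x)=0$, and take $\Pi_\varepsilon$ to be the image of the coordinate parallelogram spanned by $\varepsilon u,\varepsilon v$. Its boundary $\partial\Pi_\varepsilon$ is the concatenation of four straight segments $\sigma_1,\dots,\sigma_4$, traversed in the order $u$, $v$, $-u$, $-v$. By (P2),
\[
K_{\partial\Pi_\varepsilon}=K_{\sigma_4}K_{\sigma_3}K_{\sigma_2}K_{\sigma_1}.
\]
The target value is
\[
F_x(u,v)=\partial_u\omega_x(v)-\partial_v\omega_x(u)+[\omega_x(u),\omega_x(v)],
\]
with the sign convention fixed by \eqref{eq:pt-ode}. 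This expression is bounded by Assumption~\ref{ass:Dprime}, and it is visibly skew in $(u,v)$, so the skew-symmetry claim comes for free.

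\textbf{Step 1: second-order expansion of each side.} On a segment $\sigma$ from $p$ with direction $\varepsilon w$, Remark~\ref{def:connection} gives $K_\sigma$ as the norm-convergent Dyson series for the ODE with coefficient $-\varepsilon\,\omega_{p+s\varepsilon w}(w)$. The integrand is norm-bounded by $C\varepsilon$ on a compact neighbourhood. Therefore
\[
K_\sigma=I-\varepsilon\int_0^1\omega_{p+s\varepsilon w}(w)\,ds+\tfrac{\varepsilon^2}{2}\,\omega_p(w)^2+o(\varepsilon^2)
\]
in operator norm. The third-order Dyson terms are $O(\varepsilon^3)$ in norm. Replacing $\omega$ by $\omega_p$ in the second-order term costs $o(\varepsilon^2)$ by norm continuity of $\omega$.

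For the first-order integral, I use the directional derivatives: $\omega_{p+s\varepsilon w}(w)=\omega_p(w)+s\varepsilon\,\partial_w\omega_p(w)+o(\varepsilon)$. Here I need continuity of $\partial\omega$ in norm, so that the remainder is uniform. If Assumption~\ref{ass:Dprime} is read only strongly, the same argument gives a strong-limit version. The base points are $p\in\{0,\varepsilon u,\varepsilon u+\varepsilon v,\varepsilon v\}$, and I expand each $\omega_p$ about $x$ to first order in $\varepsilon$.

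\textbf{Step 2: multiply the four factors and collect terms.}
\begin{itemize}
\item At order $\varepsilon$, the terms $\pm\omega_x(u)$ and $\pm\omega_x(v)$ cancel.
\item At order $\varepsilon^2$, the derivative terms combine by the usual Stokes-type cancellation. The two $u$-sides differ in base point by $\varepsilon v$, which yields $\partial_v\omega_x(u)$. Likewise the two $v$-sides yield $\partial_u\omega_x(v)$.
\item Also at order $\varepsilon^2$, the quadratic terms $\tfrac12\omega^2$ from each side together with the cross products of first-order terms sum to the commutator $[\omega_x(u),\omega_x(v)]$.
\end{itemize}
All remainders are $o(\varepsilon^2)$ in operator norm. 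Hence $\varepsilon^{-2}(K_{\partial\Pi_\varepsilon}-I)\to -F_x(u,v)$ in norm, which is stronger than the required convergence that is uniform on bounded sets of $h$.

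\textbf{Main obstacle.} The main difficulty is the bookkeeping of the remainders. The first-order terms must be controlled to $o(\varepsilon^2)$ after summation, which requires a mean-value estimate
\[
\|\omega_{p}(w)-\omega_x(w)-\partial_{p-x}\omega_x(w)\|=o(|p-x|).
\]
I would obtain this from the integral form of the mean value theorem along segments, using the continuity of $\partial_u\omega$ in Assumption~\ref{ass:Dprime}. I would also check that the result does not depend on the chart at the level needed. The limit depends only on first derivatives at $x$, and the chart change alters $\Pi_\varepsilon$ only at order $\varepsilon^2$ in position. Since the loop holonomy is $I+O(\varepsilon^2)$ in area, this changes the holonomy by $o(\varepsilon^2)$.
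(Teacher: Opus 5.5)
Your proposal is correct and follows essentially the same route as the paper: a second-order Dyson expansion of the transport along each of the four coordinate sides, followed by expansion of the base points about $x$ and collection of terms in the product $K_{\sigma_4}K_{\sigma_3}K_{\sigma_2}K_{\sigma_1}$, giving $-\varepsilon^2\bigl(\partial_u\omega_x(v)-\partial_v\omega_x(u)+[\omega_x(u),\omega_x(v)]\bigr)+o(\varepsilon^2)$ in operator norm. The one small difference is that you read skew-symmetry directly off the explicit formula for $F$, whereas the paper derives it from orientation reversal of $\partial\Pi_\varepsilon$ together with \emph{(P1)}; your remarks on norm-continuity of $\partial\omega$ and on chart independence go slightly beyond what the paper addresses.
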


\begin{proof}
Let $\partial\Pi_\varepsilon
=\bar\gamma_2^{\,\varepsilon v}\cdot\bar\gamma_1^{\,\varepsilon u}
\cdot\gamma_2^{\,\varepsilon v}\cdot\gamma_1^{\,\varepsilon u}$
be the boundary of the coordinate parallelogram spanned by
$\varepsilon u,\varepsilon v$ at $x$ (sides traversed
$u$-direction, $v$-direction, then back). Since $\omega$ is
norm-continuous with continuous directional derivatives
(Assumption~\ref{ass:Dprime}), each side admits the norm-convergent
Dyson expansion of its transport to second order: for the side from
$y$ in direction $\varepsilon w$,
\begin{equation}
\label{eq:arc-expansion}
K = I - \varepsilon\,\omega_y(w)
- \tfrac{\varepsilon^2}{2}\,\partial_w\omega_y(w)
+ \tfrac{\varepsilon^2}{2}\,\omega_y(w)^2
+ o(\varepsilon^2),
\end{equation}
in operator norm, where the middle terms come from expanding
$\omega$ along the side and from the second Dyson term. Multiplying
the four expansions \eqref{eq:arc-expansion} for the four sides
(with base points $x$, $x+\varepsilon u$, $x+\varepsilon v$, $x$ up
to $o(\varepsilon)$, expanded via
$\omega_{x+\delta}=\omega_x+\partial_\delta\omega_x+o(\|\delta\|)$)
and collecting terms, all first-order terms cancel, the
$\tfrac{\varepsilon^2}{2}$-terms cancel in opposite-side pairs, and
the surviving second-order contribution is
\begin{equation}
\label{eq:parallelogram-expansion}
K_{\partial\Pi_\varepsilon} - I
= -\varepsilon^2\,\bigl(\partial_u\omega_x(v)-\partial_v\omega_x(u)
+[\omega_x(u),\omega_x(v)]\bigr)
+ o(\varepsilon^2)
\end{equation}
in operator norm; this is the standard curvature computation for
connections with values in a Banach algebra, carried out here with
the sign convention of \eqref{eq:pt-ode}. Dividing by
$\varepsilon^2$ and letting $\varepsilon\to0$ yields the limit
$-F_x(u,v)h$ with
$F_x(u,v)=d\omega_x(u,v)+[\omega_x(u),\omega_x(v)]$, where
$d\omega_x(u,v)=\partial_u\omega_x(v)-\partial_v\omega_x(u)$;
boundedness is clear from Assumption~\ref{ass:Dprime}, and
skew-symmetry follows from orientation reversal of
$\partial\Pi_\varepsilon$ together with \emph{(P1)}.
\end{proof}

\begin{definition}[Curvature]
\label{def:curvature}
Under Assumptions~\ref{ass:D} and~\ref{ass:Dprime}, the
\emph{curvature} of the operator connection is the
$\mathcal{B}(\mathcal{H})$-valued $2$-form $F$ on $M$ whose value at
$x\in M$ on tangent vectors $u,v\in T_xM$ is
\begin{equation}
\label{eq:curvature}
F_x(u,v) \;=\; d\omega_x(u,v) + [\omega_x(u),\,\omega_x(v)],
\end{equation}
where $\omega$ is the connection $1$-form of
Assumption~\ref{ass:D} and $[\cdot,\cdot]$ is the commutator
in $\mathcal{B}(\mathcal{H})$; equivalently,
$F_x(u,v)h=-\lim_{\varepsilon\to0}\varepsilon^{-2}
(K_{\partial\Pi_\varepsilon}-I)h$ by
Proposition~\ref{prop:curvature-exists}.
\end{definition}

\begin{theorem}[Flatness characterization]
\label{thm:flatness}
Under Assumptions~\ref{ass:D} and~\ref{ass:Dprime}, the following
conditions are equivalent:

\noindent\emph{(i)} $F = 0$ (the connection has zero curvature).

\noindent\emph{(ii)} $K_\gamma = I$ for every contractible closed loop
$\gamma$ in $M$.

\noindent\emph{(iii)} The transport $K_\gamma$ depends only on
the homotopy class of $\gamma$ relative to its endpoints.

\noindent
In particular, if $M$ is simply connected, then flatness is equivalent
to the transport being determined by its endpoints alone, i.e.\ to the
path cocycle descending to a two-point cocycle in the sense of
Definition~\ref{def:cocycle}.
\end{theorem}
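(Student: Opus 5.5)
I will prove the cycle (i)$\Rightarrow$(iii)$\Rightarrow$(ii)$\Rightarrow$(i), and then read off the simply connected case.

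\emph{(ii)$\Rightarrow$(i).} This is immediate from Proposition~\ref{prop:curvature-exists}. For each $x$ and $u,v\in T_xM$, the coordinate parallelogram boundary $\partial\Pi_\varepsilon$ is a small loop inside a chart, so it is contractible. Hence $K_{\partial\Pi_\varepsilon}=I$ for all small $\varepsilon$, and the limit \eqref{eq:curvature-limit} vanishes, giving $F_x(u,v)=0$.

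\emph{(iii)$\Rightarrow$(ii).} A contractible loop at $x_0$ is homotopic rel endpoints to the constant loop. By (P1) the constant loop has transport $I$, so $K_\gamma=I$.

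\emph{(i)$\Rightarrow$(iii).} This is the main step. Let $H\colon[0,1]^2\to M$ be a piecewise smooth homotopy rel endpoints between $\gamma_0$ and $\gamma_1$. A continuous homotopy can first be smoothed by a standard approximation, keeping the endpoints fixed. Set $\gamma_r=H(r,\cdot)$ and $P_r(s)=K_{\gamma_r|_{[0,s]}}$, which solves \eqref{eq:pt-ode}. I would prove $\frac{d}{dr}K_{\gamma_r}=0$ by the variational formula
\[
\frac{d}{dr}P_r(1)=-\int_0^1 P_r(1)P_r(s)^{-1}\,F_{H}\!\bigl(\partial_rH,\partial_sH\bigr)\,P_r(s)\,ds\;-\;\omega(\partial_rH(r,1))P_r(1)+P_r(1)\,\omega(\partial_rH(r,0)).
\]
The two boundary terms vanish because the endpoints are fixed. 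The integral vanishes when $F=0$.

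To justify this formula, I define $Q(s)=P_r(s)^{-1}\partial_rP_r(s)$. Differentiating the Dyson/ODE solution in the parameter $r$ is legitimate by Assumption~\ref{ass:Dprime}: $\omega$ has continuous directional derivatives, so the right-hand side of \eqref{eq:pt-ode} is $C^1$ in $r$ in operator norm. Differentiating $\partial_sP=-\omega(\partial_sH)P$ in $r$ then gives
\[
\partial_sQ=-P^{-1}\bigl(\partial_r[\omega(\partial_sH)]\bigr)P,
\]
and after subtracting $\partial_s\bigl(P^{-1}\omega(\partial_rH)P\bigr)$ one obtains the curvature combination $P^{-1}F(\partial_rH,\partial_sH)P$. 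This is the same algebra as in \eqref{eq:parallelogram-expansion}. The main obstacle is this differentiability and its bookkeeping. If the direct derivative proves awkward, I would fall back on a discrete argument: subdivide $[0,1]^2$ into $N^2$ small squares and write $\bar\gamma_1\cdot\gamma_0$ as a product of conjugated small-square boundary loops (a lasso decomposition). Each factor is $I+o(N^{-2})$ uniformly when $F=0$, using the uniform $o(\varepsilon^2)$ from Proposition~\ref{prop:curvature-exists} on the compact image of $H$. The transports are uniformly bounded there by Gronwall, so the product of the $N^2$ factors is $I+o(1)$. Letting $N\to\infty$ gives $K_{\gamma_1}=K_{\gamma_0}$.

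\emph{Simply connected case.} If $M$ is simply connected, any two paths with the same endpoints are homotopic rel endpoints. By (iii), $K_{x,y}:=K_\gamma$ for any $\gamma$ from $y$ to $x$ is then well defined. Remark~\ref{rem:two-point} together with Lemma~\ref{lem:no-scalar-monodromy} (for $\lambda$) shows this is a two-point cocycle. Condition (C) follows from (P2), and joint strong continuity follows from choosing paths depending continuously on the endpoints locally, combined with (P4). Conversely, path independence gives $K_\gamma=I$ on every loop, hence (ii) and thus $F=0$.
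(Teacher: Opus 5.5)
Your proof is correct and takes essentially the paper's approach: the same homotopy-variation formula kills the curvature integral, the same shrinking-parallelogram limit from Proposition~\ref{prop:curvature-exists} gives the converse, and the middle implication is elementary in both. The differences are cosmetic: you apply the formula to paths homotopic rel endpoints with the boundary terms written out (the paper to based loops contracted to a point), you traverse the cycle as (i)$\Rightarrow$(iii)$\Rightarrow$(ii)$\Rightarrow$(i), and you derive the variation formula where the paper only asserts it; your lasso fallback is not needed, and would in any case require a version of Proposition~\ref{prop:curvature-exists} that is uniform in the base point and valid for curvilinear squares, which the paper does not supply.
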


\begin{proof}
The argument adapts the finite-dimensional flatness criterion for
connections on principal bundles
\cite[Ch.~II, Thm.~9.1]{kobayashi-nomizu} to the
$\mathcal{B}(\mathcal{H})$-valued setting; all steps below use only
the ODE \eqref{eq:pt-ode} with its norm-continuous coefficient and
the norm estimates of Remark~\ref{def:connection}.

\emph{(i) $\Rightarrow$ (ii).} Suppose $F=0$. Let
$H\colon[0,1]^2\to M$ be a smooth homotopy of loops at $x_0$,
$H(s,\cdot)=\gamma_s$, with $\gamma_0$ the constant loop, and write
$P_s(t)$ for the transport along $\gamma_s|_{[0,t]}$. The homotopy
variation formula for connections with values in a Banach algebra
--- obtained by differentiating \eqref{eq:pt-ode} in $s$ and
integrating by parts in $t$, the endpoint terms vanishing because
the loops are based at $x_0$ --- reads
\begin{equation*}
\partial_s P_s(1)
= -\,P_s(1)\int_0^1 P_s(t)^{-1}\,
F_{H(s,t)}\!\bigl(\partial_sH,\partial_tH\bigr)\,P_s(t)\,dt.
\end{equation*}
Since $F\equiv0$, $\partial_sP_s(1)=0$, hence
$K_{\gamma}=P_1(1)=P_0(1)=I$.

\emph{(ii) $\Rightarrow$ (iii).} If $\gamma_1,\gamma_2$ are
homotopic paths from $y$ to $x$, then
$\gamma_1\cdot\bar\gamma_2$ is contractible at $y$, so
$K_{\gamma_1\cdot\bar\gamma_2}=I$ by (ii), hence
$K_{\gamma_1}K_{\gamma_2}^{-1}=I$, i.e.\ $K_{\gamma_1}=K_{\gamma_2}$.

\emph{(iii) $\Rightarrow$ (i).} Homotopy invariance implies
$K_{\partial\Pi_\varepsilon}=I$ for every contractible
parallelogram $\Pi_\varepsilon$. By
Proposition~\ref{prop:curvature-exists},
$F_x(u,v)h = -\lim_{\varepsilon\to 0}\varepsilon^{-2}(K_{\partial\Pi_\varepsilon}-I)h=0$.
\end{proof}

The flatness characterization, the commutant constraint of
Proposition~\ref{prop:commutant}, and the scalar rigidity of
Lemma~\ref{lem:no-scalar-monodromy} combine into a complete
description of the topological sector.

\begin{remark}[Networks and cocycles on Hilbert bundles]
\label{rem:bundle-extension}
The notions of operator network and path intertwining cocycle extend
verbatim fiberwise to a Hilbert bundle $E\to M$ with fiber
$\mathcal{H}$: the generators act in the fibers $E_x$ with a dense
domain subbundle, the transport operators are fiber maps
$K_\gamma\colon E_{\gamma(0)}\to E_{\gamma(1)}$ satisfying
\emph{(P1)}--\emph{(P4)} fiberwise, and regularity is required in
local trivializations. All results of
Sections~\ref{sec:connection}--\ref{sec:holonomy} hold in this
setting with identical proofs; the trivial bundle treated so far is
merely notationally lightest. This extension is used in the
realization part of the following theorem.
\end{remark}

\begin{theorem}[Classification of flat holonomy]
\label{thm:classification}
Let $\{K_\gamma,\lambda\}$ be a flat path intertwining cocycle on a
continuous operator network over the connected manifold $M$, and fix
$x_0\in M$.

\noindent\emph{(i)} The assignment $\gamma\mapsto K_\gamma$ descends
to a group homomorphism
\begin{equation}
\label{eq:holonomy-rep}
\rho\colon\pi_1(M,x_0)\;\longrightarrow\;
\mathcal{G}\bigl(\{A_{x_0}\}'\bigr),
\end{equation}
where $\mathcal{G}(\{A_{x_0}\}')$ denotes the group of boundedly
invertible elements of the commutant of $A_{x_0}$. If the cocycle is
unitary, the image lies in the unitary group
$\mathcal{U}(\{A_{x_0}\}')$.

\noindent\emph{(ii)} Conversely, let $A_0$ be a positive self-adjoint
operator with compact resolvent and let
$\rho\colon\pi_1(M,x_0)\to\mathcal{U}(\{A_0\}')$ be any homomorphism.
Then $\rho$ is the holonomy representation of a flat unitary path
intertwining cocycle with $\lambda\equiv1$, canonically realized on
the flat Hilbert bundle $E_\rho$ with fiberwise generator $A_0$.
If $\mathcal{H}$ is infinite-dimensional, $E_\rho$ is trivializable,
and whenever the trivialization $T=\{T_x\}$ can be chosen $C^1$ and
domain-preserving, the construction transfers to a regular flat
cocycle on the trivial bundle $\mathcal{H}\times M$ over the
isospectral network $A_x=T_xA_0T_x^{-1}$. If, in addition, the flat
principal $\mathcal{U}(\{A_0\}')$-bundle associated with $\rho$ is
trivializable --- as is always the case for $M=S^1$ --- the
realization can be chosen with the constant network
$A_x\equiv A_0$.
\end{theorem}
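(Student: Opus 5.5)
For part (i) we use flatness through Theorem~\ref{thm:flatness}. Under (iii) of that theorem, $K_\gamma$ for a loop $\gamma$ at $x_0$ depends only on the based homotopy class $[\gamma]\in\pi_1(M,x_0)$, so $\rho([\gamma]):=K_\gamma$ is well defined. Axiom (P2) gives $\rho([\gamma_2][\gamma_1])=K_{\gamma_2\cdot\gamma_1}=K_{\gamma_2}K_{\gamma_1}$. Axiom (P1) gives $\rho(1)=I$ and $\rho([\gamma]^{-1})=K_{\bar\gamma}=K_\gamma^{-1}$. Hence $\rho$ is a homomorphism; if one prefers the usual left-to-right loop product, the order convention can be fixed by composing with inversion. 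Proposition~\ref{prop:commutant} shows that each $K_\gamma$ commutes with $\mathcal{S}_{x_0}(t)$, and hence with every spectral projection and with $A_{x_0}$ on its domain. Since $K_\gamma^{-1}=K_{\bar\gamma}$ has the same property, the image lies in $\mathcal{G}(\{A_{x_0}\}')$. If every $K_\gamma$ is unitary, the image lies in $\mathcal{U}(\{A_{x_0}\}')$.

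For part (ii), the realization, we use the standard flat-bundle construction. Let $\tilde M$ be the universal cover, on which $\pi_1(M,x_0)$ acts by deck transformations. Set $E_\rho:=\tilde M\times_{\rho}\mathcal{H}$, the quotient of $\tilde M\times\mathcal{H}$ by $(\tilde x\cdot g,h)\sim(\tilde x,\rho(g)h)$. Because $\rho(g)$ commutes with $A_0$ and preserves $\mathcal{D}(A_0)$, the fiberwise operator $[\tilde x,h]\mapsto[\tilde x,A_0h]$ is well defined on each fiber, and so is $e^{-tA_0}$. Transport along a path $\gamma$ is defined by lifting $\gamma$ to $\tilde M$ and carrying $[\tilde\gamma(0),h]$ to $[\tilde\gamma(1),h]$. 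This transport is unitary, satisfies (P1), (P2) and (P4), and depends only on the homotopy class, so it is flat. It satisfies (P3) with $\lambda\equiv1$, since $A_0$ is the same in every chart. Its holonomy around $[\gamma]$ is $\rho([\gamma])^{\pm1}$, with the sign depending on the convention; this matches the correspondence in (i), with the same convention for the deck action. Remark~\ref{rem:bundle-extension} then makes this a flat unitary cocycle on $E_\rho$.

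Next comes triviality when $\dim\mathcal{H}=\infty$. By Kuiper's theorem, $\mathcal{U}(\mathcal{H})$ is contractible in the norm topology, so every principal $\mathcal{U}(\mathcal{H})$-bundle over a paracompact manifold is trivial. Choosing a trivialization $T_x\colon E_{\rho,x}\to\mathcal{H}$, we conjugate everything: $A_x:=T_xA_0T_x^{-1}$ and $K'_\gamma:=T_{\gamma(1)}K_\gamma T_{\gamma(0)}^{-1}$. All axioms are preserved. If $T$ is $C^1$ and domain preserving, then differentiating $K'_{\gamma|_{[0,s]}}$ gives the connection form $\omega=T\,dT^{-1}$ in the flat chart, which is bounded and norm-continuous, so Assumption~\ref{ass:D} holds. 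The domain condition is what makes Assumption~\ref{ass:R} hold for the new network. Finally, suppose the flat principal $\mathcal{U}(\{A_0\}')$-bundle $\tilde M\times_\rho\mathcal{U}(\{A_0\}')$ admits a section. Then $T_x$ can be taken in $\mathcal{U}(\{A_0\}')$, so $A_x=T_xA_0T_x^{-1}=A_0$ is constant. For $M=S^1$ we argue concretely. Write $\rho(\text{generator})=U\in\mathcal{U}(\{A_0\}')$. Since $U$ preserves each finite-dimensional eigenspace of $A_0$, the spectral theorem in each block gives a self-adjoint $B\in\{A_0\}'$ with $e^{iB}=U$; $B$ may be unbounded, but it is block-bounded. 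Put $K$ along $\theta\in[0,s]$ equal to $e^{-isB}$, or more carefully $e^{-i\theta B}$ on the interval, so the loop holonomy is $U^{-1}$. Inverting $U$ if needed gives holonomy $\rho$ on the constant network.

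The main obstacle is regularity in the $S^1$ constant case. The natural logarithm $B$ need not be bounded, so $\omega=iB\,d\theta$ may fail Assumption~\ref{ass:D}. I would handle this by choosing the block logarithms with eigenvalues in $(-\pi,\pi]$, which gives $\|B\|\le\pi$, so $B$ is bounded and the regular constant realization exists. A secondary delicacy is the topology on $\mathcal{U}(\{A_0\}')$ used in the general trivializability clause. The claim there is conditional, so I would state it precisely as the existence of a continuous section in the norm topology, and otherwise avoid that subtlety.
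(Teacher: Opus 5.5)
Your proposal is correct and follows the paper's proof essentially step by step. For (i) you use Theorem~\ref{thm:flatness}\emph{(iii)} together with \emph{(P1)}, \emph{(P2)} and Proposition~\ref{prop:commutant}; for (ii) you use the flat bundle $E_\rho$ over the universal cover, Kuiper's theorem, push-forward by $T$, and on $S^1$ a logarithm $B\in\{A_0\}'$ with $\|B\|\le\pi$. You obtain $B$ blockwise on the finite-dimensional eigenspaces, where the paper uses functional calculus in the von Neumann algebra $\{A_0\}'$. The only slip is a normalization: with $e^{iB}=U$ and the angle running over $[0,2\pi]$, the transport should be $e^{-i\tilde\theta B/2\pi}$ for lift displacement $\tilde\theta$ (the paper's $\omega=-(B/2\pi)\,d\theta$ with skew-adjoint $B$) to give holonomy $U^{\mp1}$.
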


\begin{proof}
\emph{(i).} By flatness and Theorem~\ref{thm:flatness}\emph{(iii)},
$K_\gamma$ depends only on the homotopy class
$[\gamma]\in\pi_1(M,x_0)$; concatenation \emph{(P2)} makes the
assignment multiplicative and \emph{(P1)} sends inverse classes to
inverse operators, so $\rho$ is a homomorphism into
$\mathcal{B}(\mathcal{H})^\times$. Its image lies in the commutant by
Proposition~\ref{prop:commutant}.

\emph{(ii).} This is the monodromy construction for flat bundles.
Let $p\colon\widetilde M\to M$ be the universal cover, on which
$\pi_1(M,x_0)$ acts by deck transformations, and form the flat
Hilbert bundle
$E_\rho:=(\widetilde M\times\mathcal{H})/\pi_1(M,x_0)$, where the
group acts by $g\cdot(\tilde m,h)=(g\tilde m,\rho(g)h)$. Parallel
transport along a path in $M$ is induced by the constant transport
$(\tilde m,h)\mapsto(\tilde m',h)$ on $\widetilde M\times\mathcal{H}$,
and its monodromy is $\rho$. Because $\rho$ takes values in
$\{A_0\}'$, the constant operator family $A_0$ on
$\widetilde M\times\mathcal{H}$ descends to a well-defined fiberwise
generator on $E_\rho$, equal to $A_0$ in every flat local
trivialization, and the transport intertwines the associated
semigroups with $\lambda\equiv1$; in flat charts the transport is
locally constant, so the cocycle is regular with $\omega\equiv0$
there and $F=0$. This proves the canonical realization. If
$\mathcal{H}$ is infinite-dimensional, the unitary group of
$\mathcal{H}$ is contractible by Kuiper's
theorem~\cite{kuiper1965}, so $E_\rho$ admits a global
trivialization $T=\{T_x\}$; whenever $T$ can be chosen $C^1$ and
domain-preserving, pushing the construction forward yields a regular
flat cocycle on $\mathcal{H}\times M$ intertwining the isospectral
family $A_x=T_xA_0T_x^{-1}$. Note that $T_x$ need not commute with
$A_0$, so the pushed-forward network is in general not constant; it
is constant precisely when the trivialization can be chosen with
values in $\mathcal{U}(\{A_0\}')$, i.e.\ when the associated flat
principal $\mathcal{U}(\{A_0\}')$-bundle is trivializable. For $M=S^1$ this is
always possible, by an explicit construction: given
$V:=\rho(1)\in\mathcal{U}(\{A_0\}')$, the commutant is a von Neumann
algebra, so $V=e^{B}$ for a skew-adjoint $B\in\{A_0\}'$ with
$\|B\|\le\pi$; the constant-coefficient form
$\omega:=-(B/2\pi)\,d\theta$ then defines a regular flat cocycle on
the constant network $A_x\equiv A_0$ with holonomy
$e^{B}=V$, since $\omega$ takes values in the commutant and
therefore intertwines the constant semigroup with
$\lambda\equiv1$.
\end{proof}

\begin{remark}[The topological dichotomy]
\label{rem:dichotomy}
Together with Lemma~\ref{lem:no-scalar-monodromy},
Theorem~\ref{thm:classification} gives a complete picture of the
topological sector of the theory: the scaling layer never sees
$\pi_1(M)$, while the operator layer sees it exactly through unitary
representations in the commutant of the generator --- that is, through
the spectral multiplicity data of $A_{x_0}$. For a generator with
simple spectrum the unitary commutant is the group of diagonal phase
operators, and holonomy reduces to one unitary phase per level,
decomposing into the Berry phase and the sector twist
(Section~\ref{sec:berry}); degenerate eigenvalues admit non-Abelian
blocks in the sense of Wilczek--Zee~\cite{wilczek-zee1984}.
\end{remark}

\begin{remark}[Discrete case is always flat]
\label{rem:discrete-flat}
When $M=I$ is finite with the discrete topology, every continuous path
is constant, so path transport is trivially the identity and holonomy
is vacuous. The combinatorial cycles of the companion paper --- finite
sequences $i_0\to i_1\to\cdots\to i_k=i_0$ --- are compositions of
two-point transports, which the cocycle identity \eqref{eq:cocycle-C}
collapses to $K_{i_0,i_0}=I$. This is the precise sense in which the
companion paper lives in the flat sector of the present theory.
\end{remark}

\section{Relation to Adiabatic Theory}
\label{sec:berry}

The connection $\omega$ defined in Section~\ref{sec:connection} acts on
the full Hilbert bundle $\mathcal{H}\times M$. A classical object in
mathematical physics --- the Berry connection of adiabatic quantum
mechanics \cite{berry1984} --- is a $U(1)$-connection on the
eigenspace line bundle $E_\alpha\to M$ associated to a simple
eigenvalue $\alpha(x)$ of the Hamiltonian $A_x$. We show that the
restriction of the transport to each renormalized spectral sector
decomposes canonically as adiabatic (Berry--Wilczek--Zee) transport twisted by
an independent commutant-valued \emph{sector potential} --- the
spectral-sector block of the connection form. Adiabatic theory is
thus a
finite-rank sector of the present framework, and it captures the
transport exactly on the class of \emph{adiabatically normalized}
cocycles, for which the sector potentials vanish. This is not a
technicality: by the classification of
Section~\ref{sec:holonomy}, arbitrary commutant-valued twists do
occur, so the sector potential is genuine freedom, not a removable
gauge artifact.

\begin{theorem}[Sector decomposition of the transport]
\label{prop:berry}
Let $\{K_\gamma,\lambda\}$ be a regular path intertwining cocycle
(Assumption~\ref{ass:D}) on a network as in
Section~\ref{sec:networks}, let $\tau$ be its gauge function
(Lemma~\ref{lem:no-scalar-monodromy}), and let
$B_x:=\tau(x)A_x$ be the renormalized generators, so that
$\Sigma:=\sigma(B_x)$ is independent of $x$
(Corollary~\ref{cor:spectral-rigidity}). Fix $\mu\in\Sigma$ and set
$E_\mu(x):=\ker(B_x-\mu I)$, with orthogonal projection $P_\mu(x)$.
Fix a smooth path $\gamma$ and assume that
$s\mapsto P_\mu(\gamma(s))$ has constant finite rank $n$ and is
$C^1$ in operator norm; then a $C^1$ orthonormal frame
$\{e_j(s)\}_{j=1}^n$ of $E_\mu(\gamma(s))$ exists along $\gamma$
(a global frame over $M$ need not exist, see
Remark~\ref{rem:mobius}).

\noindent\emph{(a)} $K_{\gamma|_{[0,s]}}$ maps
$E_\mu(\gamma(0))$ onto $E_\mu(\gamma(s))$.

\noindent\emph{(b)} The frame matrix $c(s)$, defined by
$K_{\gamma|_{[0,s]}}\,e_j(0)=\sum_{i}c_{ij}(s)\,e_i(s)$, satisfies
\begin{equation}
\label{eq:berry-recovery}
\dot c(s)
= -\bigl(\mathcal{A}^{\mu}(s)+\Omega^{\mu}(s)\bigr)\,c(s),
\qquad c(0)=I_n,
\end{equation}
where
$\mathcal{A}^{\mu}_{ij}(s)
=\langle e_i(s),\tfrac{d}{ds}e_j(s)\rangle$
is the Berry--Wilczek--Zee connection of the frame and
\begin{equation}
\label{eq:sector-potential}
\Omega^{\mu}_{ij}(s)
=\bigl\langle e_i(s),\,
\omega_{\gamma(s)}\!\left(\dot\gamma(s)\right)e_j(s)\bigr\rangle
\end{equation}
is the \emph{sector potential} --- the $E_\mu$-block of the
connection form. For unitary cocycles both matrices are
skew-Hermitian and $c(s)$ is unitary.

\noindent\emph{(c)} The transport
$K_{\gamma|_{[0,s]}}\big|_{E_\mu}$ coincides with the adiabatic
(Berry--Wilczek--Zee) transport \emph{for every} $s\in[0,1]$ if and
only if $\Omega^{\mu}(s)=0$ for all $s$; for the endpoint operator
$K_\gamma|_{E_\mu}$ alone, vanishing of the sector potential is
sufficient. We call the cocycle \emph{adiabatically normalized} if
$P_\mu(x)\,\omega_x(v)\,P_\mu(x)=0$ for every $\mu\in\Sigma$, every
$x\in M$ and $v\in T_xM$; for such cocycles the restriction of the
transport to every spectral sector is exactly the adiabatic
transport.
\end{theorem}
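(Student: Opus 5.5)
For part (a) I will use the generator transport of Remark~\ref{rem:two-point} along the subpath $\gamma_s:=\gamma|_{[0,s]}$. It reads $K_{\gamma_s}A_{\gamma(0)}h=\lambda(\gamma_s)A_{\gamma(s)}K_{\gamma_s}h$ for $h\in D$. Lemma~\ref{lem:no-scalar-monodromy} gives $\lambda(\gamma_s)=\tau(\gamma(s))/\tau(\gamma(0))$, and multiplying by $\tau(\gamma(0))$ yields $K_{\gamma_s}B_{\gamma(0)}=B_{\gamma(s)}K_{\gamma_s}$ on $D$. Hence $K_{\gamma_s}$ maps $\ker(B_{\gamma(0)}-\mu)$ into $\ker(B_{\gamma(s)}-\mu)$. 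The reversed path $\bar\gamma_s$ gives the reverse inclusion, because $K_{\bar\gamma_s}=K_{\gamma_s}^{-1}$ by (P1). This shows the map is onto; surjectivity also follows from injectivity once the ranks are equal to $n$.

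For the frame in (b), I will use that $P(s):=P_\mu(\gamma(s))$ is $C^1$ in norm with constant rank. The Kato transport $W(s)$, defined by $\dot W=[\dot P,P]W$ with $W(0)=I$, is then unitary and satisfies $W(s)P(0)W(s)^{-1}=P(s)$. Setting $e_j(s):=W(s)e_j(0)$ gives the required $C^1$ orthonormal frame; any other $C^1$ frame differs from it by a $C^1$ unitary matrix.

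Next I write $K_{\gamma_s}e_j(0)=\sum_i c_{ij}(s)e_i(s)$. This expansion is legitimate by (a). The coefficients are $c_{ij}(s)=\langle e_i(s),K_{\gamma_s}e_j(0)\rangle$, and they are $C^1$ by \eqref{eq:pt-ode} together with $C^1$ of the frame. I will differentiate the identity $K_{\gamma_s}e_j(0)=\sum_i c_{ij}e_i$ in $s$. The left side is $-\omega_{\gamma(s)}(\dot\gamma)K_{\gamma_s}e_j(0)=-\sum_k c_{kj}\,\omega(\dot\gamma)e_k$. The right side is $\sum_i\dot c_{ij}e_i+\sum_k c_{kj}\dot e_k$.

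Taking the inner product with $e_i(s)$ and using orthonormality gives
\[
\dot c_{ij}=-\sum_k\bigl(\langle e_i,\omega(\dot\gamma)e_k\rangle+\langle e_i,\dot e_k\rangle\bigr)c_{kj},
\]
which is \eqref{eq:berry-recovery}. The main technical point is justifying the differentiation: the derivative of $K_{\gamma_s}h$ exists strongly by Assumption~\ref{ass:D}, and the inner product of a strongly $C^1$ curve with a norm-$C^1$ curve is differentiable. This step is routine.

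Both matrices are skew-Hermitian in the unitary case. For $\mathcal{A}^\mu$ this comes from differentiating $\langle e_i,e_j\rangle=\delta_{ij}$. For $\Omega^\mu$ it comes from $\omega$ being skew-adjoint, since $K_{\gamma_s}$ is unitary for all $s$ and $\omega=-\frac{d}{ds}K$ at $s=0$. A linear ODE with skew-Hermitian coefficient preserves unitarity of $c(s)$.

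For (c), I define the adiabatic transport as the solution $c^{\mathrm{ad}}$ of $\dot c=-\mathcal{A}^\mu c$ with $c(0)=I_n$. In the Kato frame this is just $W(s)$, with $\mathcal{A}^\mu$ expressed in the chosen frame. If $\Omega^\mu\equiv0$, then uniqueness for linear ODEs with continuous coefficients gives $c=c^{\mathrm{ad}}$ for all $s$, and in particular at $s=1$; this is the sufficiency statement for the endpoint. Conversely, suppose $c(s)=c^{\mathrm{ad}}(s)$ for all $s$. Subtracting the two ODEs gives $\Omega^\mu(s)c(s)=0$. Since $c(s)$ is invertible (as $K$ is injective on $E_\mu$ and (a) holds), $\Omega^\mu(s)=0$.

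Finally, adiabatic normalization means $P_\mu\omega P_\mu=0$, which is exactly $\Omega^\mu\equiv0$ in every frame along every path. The last assertion therefore follows by applying the above to each $\mu$ and each path. The one point needing care is frame independence of the notion of adiabatic transport. Under a change of frame $e\mapsto eU$, both $c$ and $c^{\mathrm{ad}}$ transform by the same conjugation rule, because $\mathcal{A}^\mu$ changes by the standard gauge term $U^{-1}\dot U$ while $\Omega^\mu$ changes tensorially. I expect this compatibility check to be the main conceptual obstacle, though it is computationally light.
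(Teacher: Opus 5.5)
Your proposal is correct and follows the paper's route. For (a), both use the renormalized intertwining $K_{\gamma_s}B_{\gamma(0)}=B_{\gamma(s)}K_{\gamma_s}$ together with the reversed path. For (b), both differentiate the frame expansion via \eqref{eq:pt-ode} and project onto $e_i(s)$, and for (c) both use ODE uniqueness together with invertibility of $c(s)$. You add three details the paper leaves implicit, none of which changes the argument: the Kato-transport construction of the frame, differentiability via $c_{ij}=\langle e_i,K_{\gamma_s}e_j(0)\rangle$, and the frame-covariance check.
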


\begin{proof}
\emph{(a).} By the generator transport of
Remark~\ref{rem:two-point} (with the domain identity of
Proposition~\ref{prop:generator}), writing
$\gamma_s:=\gamma|_{[0,s]}$ and multiplying
$K_{\gamma_s}A_{\gamma(0)}
=\lambda(\gamma_s)A_{\gamma(s)}K_{\gamma_s}$ by $\tau(\gamma(0))$
and using $\lambda(\gamma_s)=\tau(\gamma(s))/\tau(\gamma(0))$
(Lemma~\ref{lem:no-scalar-monodromy}),
\begin{equation*}
K_{\gamma_s}\,B_{\gamma(0)}
= B_{\gamma(s)}\,K_{\gamma_s}
\qquad\text{on }\mathcal{D}(B_{\gamma(0)}).
\end{equation*}
Hence for $\psi\in E_\mu(\gamma(0))$:
$B_{\gamma(s)}K_{\gamma_s}\psi=K_{\gamma_s}B_{\gamma(0)}\psi
=\mu\,K_{\gamma_s}\psi$, i.e.\
$K_{\gamma_s}\psi\in E_\mu(\gamma(s))$ --- no choice of eigenvalue
branch is involved. Applying the same argument to the reversed path
and $K_{\gamma_s}^{-1}$ gives surjectivity.

\emph{(b).} By part \emph{(a)} the expansion
$K_{\gamma_s}e_j(0)=\sum_i c_{ij}(s)e_i(s)$ is well defined, and the
frame is differentiable by the $C^1$ hypothesis on $P_\mu$.
Differentiating it in $s$ and using \eqref{eq:pt-ode},
\begin{equation*}
-\,\omega_{\gamma(s)}(\dot\gamma(s))\,K_{\gamma_s}e_j(0)
=\sum_i \dot c_{ij}(s)\,e_i(s)+\sum_i c_{ij}(s)\,\dot e_i(s).
\end{equation*}
Substituting the expansion on the left and taking the inner product
with $e_l(s)$ annihilates the components of
$\omega\,e_k$ orthogonal to $E_\mu(\gamma(s))$ and yields
\begin{equation*}
-\sum_k \Omega^\mu_{lk}(s)\,c_{kj}(s)
=\dot c_{lj}(s)+\sum_i \mathcal{A}^\mu_{li}(s)\,c_{ij}(s),
\end{equation*}
which is \eqref{eq:berry-recovery}. For unitary cocycles,
skew-Hermiticity of $\mathcal{A}^\mu$ follows from
differentiating $\langle e_i,e_j\rangle=\delta_{ij}$, and that of
$\Omega^\mu$ from unitarity of the transport; hence $c(s)$
solves a linear ODE with skew-Hermitian coefficient and is unitary.

\emph{(c).} If $\Omega^\mu\equiv0$ along $\gamma$, then
\eqref{eq:berry-recovery} is the adiabatic transport equation and
the two transports coincide for every $s$ (in particular at the
endpoint). Conversely, if $c(s)=c^{\mathrm{ad}}(s)$ for all
$s\in[0,1]$, differentiating and subtracting the two ODEs gives
$\Omega^\mu(s)\,c(s)=0$ with $c(s)$ invertible, hence
$\Omega^\mu(s)=0$ for all $s$. For the endpoint operator alone the
converse may fail: distinct sector potentials can produce the same
path-ordered exponential at $s=1$.
\end{proof}

\begin{remark}[Consistency with the classification]
\label{rem:sector-consistency}
The sector potential is exactly the freedom catalogued by
Theorem~\ref{thm:classification}. For instance, on $M=S^1$ with the
constant network $A_x\equiv A_0$, the cocycle
$K_\gamma=e^{\,i\int_\gamma\beta}\,I$, for any real $1$-form $\beta$
(automatically closed in one dimension), is regular and flat with
$\omega=-i\beta\,I$, has vanishing Berry--Wilczek--Zee data, and
transports every eigenline with the phase
$e^{\,i\int_\gamma\beta}$: this is precisely the $S^1$-realization
construction in the proof of
Theorem~\ref{thm:classification}\emph{(ii)}. (On a general base the
same formula defines a regular cocycle for any $\beta$, flat
precisely when $d\beta=0$.) Adiabatic
normalization removes exactly this commutant-valued gauge freedom;
without it, eigenspace data alone cannot determine the transport.
\end{remark}

\begin{corollary}[Holonomy assembles the sector monodromies]
\label{cor:assembly}
Let $\{K_\gamma\}$ be a regular path intertwining cocycle and
$\gamma$ a piecewise smooth closed loop at $x_0$.

\noindent\emph{(i)} $K_\gamma$ preserves every spectral sector
$E_\mu(x_0)$ (equivalently, every eigenspace of $A_{x_0}$), and,
whenever $P_\mu$ is $C^1$ of constant finite rank along $\gamma$,
its restriction to $E_\mu(x_0)$ is the holonomy of the sector
connection $\mathcal{A}^\mu+\Omega^\mu$ of
Theorem~\ref{prop:berry}\emph{(b)}.

\noindent\emph{(ii)} Consequently
$K_\gamma=\bigoplus_\mu K_\gamma|_{E_\mu(x_0)}$: the operator
holonomy is the assembly of the sector monodromies of all spectral
levels. It is the assembly of the adiabatic
(Berry--Wilczek--Zee~\cite{wilczek-zee1984}) monodromies whenever
the cocycle is adiabatically normalized.

\noindent\emph{(iii)} The holonomy is not determined by the Berry
connection one-forms: Example~\ref{ex:holonomy} exhibits an
adiabatically normalized network in which the Berry one-form of
every level vanishes identically while $K_\gamma$ is a nontrivial
unitary. The monodromy is carried by the orientation
($\mathbb{Z}_2$) classes of the real eigenline bundles, which are
invisible to the local connection data.
\end{corollary}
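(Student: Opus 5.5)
Part (i) follows by combining Proposition~\ref{prop:commutant} with Theorem~\ref{prop:berry}. For a closed loop $\gamma$ at $x_0$, Proposition~\ref{prop:commutant} already gives $\lambda(\gamma)=1$ and commutation of $K_\gamma$ with every spectral projection of $A_{x_0}$. Since $B_{x_0}=\tau(x_0)A_{x_0}$ with $\tau(x_0)>0$, the eigenspaces of $B_{x_0}$ and $A_{x_0}$ coincide: $E_\mu(x_0)=\ker(A_{x_0}-\mu/\tau(x_0))$. Hence $K_\gamma$ restricts to a bijection of each $E_\mu(x_0)$. The same conclusion also follows directly from Theorem~\ref{prop:berry}(a) with $s=1$ and $\gamma(1)=\gamma(0)=x_0$.

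Under the $C^1$ constant-rank hypothesis, Theorem~\ref{prop:berry}(b) applies along $\gamma$. Take a $C^1$ frame $e_j(s)$ of $E_\mu(\gamma(s))$ along $\gamma$. At $s=1$ the frame $e_j(1)$ is again an orthonormal basis of $E_\mu(x_0)$, possibly different from $e_j(0)$; write $e_j(1)=\sum_i g_{ij}\,e_i(0)$ for a fixed change-of-basis matrix $g$. The matrix of $K_\gamma|_{E_\mu(x_0)}$ in the basis $e(0)$ is then $g\,c(1)$, where $c$ solves~\eqref{eq:berry-recovery}. This is by definition the holonomy of the sector connection $\mathcal{A}^\mu+\Omega^\mu$ on the eigenbundle along $\gamma$, including the frame-closing factor $g$. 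The factor $g$ must be kept: it is exactly what carries the Möbius sign later.

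For part (ii), I use that $A_{x_0}$ has compact resolvent, so $\mathcal{H}=\overline{\bigoplus_\mu E_\mu(x_0)}$ is an orthogonal decomposition into finite-dimensional sectors. Because $K_\gamma$ is bounded and leaves each sector invariant, it equals the strongly convergent block sum $\bigoplus_\mu K_\gamma|_{E_\mu(x_0)}$: on finite sums it acts blockwise, and boundedness extends this to the closure. In the adiabatically normalized case $\Omega^\mu\equiv0$, so by Theorem~\ref{prop:berry}(c) each block is the adiabatic Berry--Wilczek--Zee monodromy.

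Part (iii) is the main obstacle, because it depends on Example~\ref{ex:holonomy}, which appears later in the paper. I will assume it has the announced properties: over $S^1$ the eigenvectors $e_n(\theta)$ are real and $C^1$ along $[0,2\pi]$, with $e_n(2\pi)=(-1)^n e_n(0)$, and the cocycle has holonomy equal to the parity operator and is adiabatically normalized. Real unit vectors satisfy $\langle e_n,\dot e_n\rangle=\tfrac12\tfrac{d}{ds}\|e_n\|^2=0$, so every Berry one-form vanishes. With $\Omega^\mu=0$ as well, $c\equiv1$, and by part (i) the holonomy on level $n$ is the frame-closing factor $g=(-1)^n$. This factor is the orientation class of the real line bundle, nontrivial exactly for the Möbius levels, so $K_\gamma$ is the nontrivial parity unitary while all local connection forms vanish.

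The delicate checks are that the example really satisfies $P_\mu\,\omega\,P_\mu=0$, and that the levels are simple with $C^1$ projections of constant rank. I would verify both directly from the example's explicit rotation generator.
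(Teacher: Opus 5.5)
Your proposal is correct and follows the paper's own argument: sector invariance comes from Proposition~\ref{prop:commutant} (with $E_\mu(x_0)=\ker(A_{x_0}-\mu/\tau(x_0))$), the block identification from Theorem~\ref{prop:berry}, the block-sum decomposition from compact resolvent plus boundedness, and part (iii) is read off from Example~\ref{ex:holonomy}. Your explicit frame-closing matrix $g$, which gives $K_\gamma|_{E_\mu(x_0)}=g\,c(1)$ in the basis $e(0)$, is a useful precision that the paper's short proof leaves implicit; the paper supplies it only in Remark~\ref{rem:mobius}, where the sign $-1$ with which the real frame returns is exactly your $g=(-1)^n$.
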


\begin{proof}
Sector invariance is Proposition~\ref{prop:commutant} (note that at
the base point $E_\mu(x_0)=\ker(A_{x_0}-\mu/\tau(x_0))$, so the
sectors are the eigenspaces of $A_{x_0}$); the identification of the
restricted action is Theorem~\ref{prop:berry}. For \emph{(ii)}, the
eigenspaces of $A_{x_0}$ span $\mathcal{H}$ (compact resolvent), and
the bounded operator $K_\gamma$ is determined by its restrictions to
them; the normalized case is Theorem~\ref{prop:berry}\emph{(c)}.
\emph{(iii)} is verified in Example~\ref{ex:holonomy} and
Remark~\ref{rem:mobius}.
\end{proof}

\section{Example: A Rotating Network with Parity Holonomy}
\label{sec:example}

We exhibit an explicit regular path intertwining cocycle with
nontrivial holonomy, confirming that the monodromy phenomenon of
Section~\ref{sec:holonomy} is not merely formal. The base manifold is
$M = S^1=\mathbb{R}/2\pi\mathbb{Z}$, and the Hilbert space is
$\mathcal{H}=\ell^2(\mathbb{N})$ with orthonormal basis
$\{e_n\}_{n\ge1}$.

\begin{example}[Rotating network with parity holonomy]
\label{ex:holonomy}
Let $A_0e_n=n\,e_n$ on its natural domain
$\mathcal{D}(A_0)=\{h:\sum n^2|h_n|^2<\infty\}$ (positive,
compact resolvent, simple spectrum), and let $P$ denote the parity
operator $Pe_n=(-1)^ne_n$. Define the bounded real skew-adjoint
operator $J$ by
\begin{equation}
\label{eq:oscillator}
Je_{4k+1}=\tfrac12\,e_{4k+3},\qquad
Je_{4k+3}=-\tfrac12\,e_{4k+1}\quad(k\ge0),
\qquad Je_n=0\ \text{for even }n,
\end{equation}
so that $\|J\|=\tfrac12$, $e^{2\pi J}=P$, and $[J,A_0]\neq0$. Set
$U_x:=e^{xJ}$ (a norm-continuous one-parameter unitary group) and
\begin{equation}
\label{eq:rotating-family}
A_x := U_{\tilde x}\,A_0\,U_{\tilde x}^{-1},
\qquad \tilde x\in\mathbb{R}\ \text{a lift of }x\in S^1.
\end{equation}
The family is well defined on $S^1$, since replacing $\tilde x$ by
$\tilde x+2\pi$ conjugates $A_0$ additionally by $e^{2\pi J}=P$,
which commutes with $A_0$. For a piecewise smooth path $\gamma$ in
$S^1$ let $\tilde\gamma$ be any lift and set
\begin{equation}
\label{eq:rotating-cocycle}
K_\gamma := U_{\tilde\gamma(1)-\tilde\gamma(0)},
\qquad
\lambda\equiv1;
\end{equation}
the displacement $\tilde\gamma(1)-\tilde\gamma(0)$ does not depend
on the choice of lift. Then $\{K_\gamma,\lambda\equiv1\}$ is a
regular path intertwining cocycle on the network
$\{A_x\}_{x\in S^1}$, with connection form
$\omega_x(v)=-\tilde v\,J$; it is flat; it is adiabatically
normalized; its holonomy along the generating loop $\gamma_1$ of
$\pi_1(S^1)$ is
\begin{equation}
\label{eq:holonomy-result}
K_{\gamma_1}=e^{2\pi J}=P\;\neq\;I;
\end{equation}
and the Berry connection one-form of every eigenline bundle
vanishes identically.
\end{example}

\begin{proof}
\emph{Path cocycle axioms and regularity.} Displacements of lifts
are additive under concatenation, change sign under reversal, vanish
for constant paths, and are invariant under orientation-preserving
reparametrization; since $x\mapsto U_x$ is a one-parameter group,
\emph{(P1)}--\emph{(P2)} follow, and each $K_\gamma$ is unitary. For
$P(s)=U_{\tilde\gamma(s)-\tilde\gamma(0)}$ we have, in operator
norm,
$P'(s)=\tilde\gamma'(s)\,J\,P(s)$,
because $J$ is bounded; hence Assumption~\ref{ass:D} holds with the
connection form $\omega_x(v)=-\tilde v\,J$, which is constant in $x$
and in particular satisfies Assumption~\ref{ass:Dprime}. Since
$\omega$ is constant and commutes with itself,
$F=d\omega+[\omega,\omega]=0$: the cocycle is flat, consistent with
the fact that the transport depends only on the displacement of the
lift, so contractible loops have transport $I$.

\emph{The network.} For $h\in\mathcal{D}(A_0)$,
\begin{equation*}
\|A_0Jh\|^2
=\sum_{k\ge0}\Bigl[\tfrac{(4k+3)^2}{4}\,|h_{4k+1}|^2
+\tfrac{(4k+1)^2}{4}\,|h_{4k+3}|^2\Bigr]
\le\tfrac94\,\|A_0h\|^2,
\end{equation*}
so $J$ is bounded in the graph norm of $A_0$ and
$e^{xJ}\mathcal{D}(A_0)=\mathcal{D}(A_0)$: the common domain
$D=\mathcal{D}(A_0)$ satisfies Assumption~\ref{ass:R}. Moreover
$(A_x-z)^{-1}=U_x(A_0-z)^{-1}U_x^{-1}$, so $x\mapsto A_x$ is even
norm-resolvent continuous, $x\mapsto U_x$ being norm-continuous;
each $A_x$ is positive self-adjoint with compact resolvent and
$\sigma(A_x)=\{n\}_{n\ge1}$. Since $\tau\equiv\mathrm{const}$ here,
the spectral sectors of Theorem~\ref{prop:berry} are the eigenlines
of $A_x$.

\emph{Intertwining \emph{(P3)} with $\lambda\equiv1$.} With lifts
$\tilde y=\tilde\gamma(0)$, $\tilde x=\tilde\gamma(1)$,
\begin{equation*}
K_\gamma\,A_{\gamma(0)}\,K_\gamma^{-1}
=U_{\tilde x-\tilde y}\,U_{\tilde y}\,A_0\,U_{\tilde y}^{-1}\,
U_{\tilde y-\tilde x}
=U_{\tilde x}\,A_0\,U_{\tilde x}^{-1}=A_{\gamma(1)},
\end{equation*}
and hence
$K_\gamma\mathcal{S}_{\gamma(0)}(t)=\mathcal{S}_{\gamma(1)}(t)K_\gamma$.

\emph{Holonomy \eqref{eq:holonomy-result}.} The generating loop
lifts to a path of displacement $2\pi$, so
$K_{\gamma_1}=e^{2\pi J}=P$, and a loop winding $k$ times has
holonomy $P^k$. The holonomy representation of
Theorem~\ref{thm:classification} is
$\rho\colon\pi_1(S^1)\cong\mathbb{Z}\to\mathcal{U}(\{A_{x_0}\}')$,
$\rho(k)=P^k$, and $P$ commutes with $A_{x_0}$ in accordance with
Proposition~\ref{prop:commutant}.

\emph{Normalization and vanishing Berry forms.} The eigenvector
field of the $n$-th line along a path with lift $\tilde x(s)$ is
$\phi_n(s)=U_{\tilde x(s)}e_n$, real in the real structure of
$\ell^2$ preserved by the real operator $J$. The sector potential
and the Berry one-form are
\begin{equation*}
\Omega^{n}(\partial_s)
=-\tilde x'(s)\,\langle U e_n,\,J\,U e_n\rangle
=-\tilde x'(s)\,\langle e_n,\,Je_n\rangle=0,
\qquad
\mathcal{A}^{n}(\partial_s)
=\tilde x'(s)\,\langle e_n,\,Je_n\rangle=0,
\end{equation*}
since $J$ is real and skew-symmetric while $e_n$ is real. Hence the
cocycle is adiabatically normalized and all Berry one-forms vanish.

\emph{Holonomy on eigenlines.} $Pe_n=(-1)^ne_n$, so $K_{\gamma_1}$
acts on the $n$-th eigenline as the sign $(-1)^n$ --- equal to $-1$
on infinitely many levels --- whence $K_{\gamma_1}\neq I$ while
every $\oint\mathcal{A}^{n}=0$ and every $\Omega^{n}=0$.
\end{proof}

\begin{remark}[M\"{o}bius eigenline bundles]
\label{rem:mobius}
For odd $n$ the eigenvector field does not close up:
$\phi_n(2\pi)=Pe_n=-e_n=-\phi_n(0)$. The real eigenline bundle over
$S^1$ is then non-orientable --- a M\"{o}bius band --- and no
globally periodic smooth real unit eigenvector field exists over
$S^1$; the pathwise frame required by Theorem~\ref{prop:berry} does
exist along $[0,1]$, but returns with the sign $-1$, which is
precisely how the sign escapes the Berry one-form. This
$\mathbb{Z}_2$ monodromy is the infinite-dimensional counterpart of
the Longuet--Higgins sign change of real eigenfunctions transported
around a conical
intersection~\cite{herzberg-longuet-higgins1963,berry1984}.
\end{remark}

\begin{remark}[Physical illustration: the rotating anisotropic
oscillator]
\label{rem:oscillator}
The same mechanism is realized by a Schr\"{o}dinger family: let
$A_0=-\Delta+a^2u_1^2+b^2u_2^2$ on $L^2(\mathbb{R}^2)$ with
$a/b\notin\mathbb{Q}$, let $U_\theta$ be the rotation group of the
plane, and set $A_x=U_{\tilde x/2}A_0U_{\tilde x/2}^{-1}$; since
$U_\pi$ is the parity operator, which fixes the even potential, the
family descends to $S^1$, the displacement cocycle has holonomy
equal to parity, the eigenfunctions (Hermite products) are real, and
the odd eigenline bundles are M\"{o}bius bands exactly as above.
However, the transport generator --- the angular momentum operator
--- is unbounded: $s\mapsto K_{\gamma|_{[0,s]}}h$ is differentiable
only for $h$ in the dense core $\mathcal{D}(G)$, so this cocycle
satisfies \emph{(P1)}--\emph{(P4)} but falls outside the bounded
regularity class of Assumption~\ref{ass:D}. A theory of unbounded
operator connections on invariant cores would be required to bring
it within the scope of the theorems above; we leave this extension
for future work.
\end{remark}

\begin{remark}[Variants and a non-example]
\label{rem:variants}
\emph{Nontrivial time-scaling.} In the example $\lambda\equiv 1$ and
$\tau\equiv\mathrm{const}$. To
combine nontrivial holonomy with a nontrivial gauge function, replace
$A_x$ by $B_x:=\varrho(x)A_x$ for a smooth
$\varrho\colon S^1\to\mathbb{R}_{>0}$. The same transport operators
satisfy
$K_\gamma B_{\gamma(0)}
=\bigl(\varrho(\gamma(0))/\varrho(\gamma(1))\bigr)\,
B_{\gamma(1)}K_\gamma$,
i.e.\ $\lambda(\gamma)=\tau(\gamma(1))/\tau(\gamma(0))$ with
$\tau=1/\varrho$, in accordance with
Lemma~\ref{lem:no-scalar-monodromy}: the scaling layer remains
single-valued on $S^1$, while the holonomy computation is unchanged.
Both phenomena --- nontrivial gauge and nontrivial holonomy --- thus
coexist.

\emph{Shift families.}
A natural first attempt at holonomy is the shifted family
$A_x=-d^2/du^2+V_0(u-f(x))$ on $L^2(\mathbb{R})$, with a confining
potential $V_0$, translations as transport, and
$f(2\pi)-f(0)=\delta\neq0$. This is a perfectly good cocycle over the
base $\mathbb{R}$ --- where every loop is contractible and no
monodromy is available --- but not over $S^1$: since $f$ does not
descend to the circle, neither does the family ($A_{x+2\pi}\neq A_x$
unless $V_0$ is $\delta$-periodic, which is excluded for confining
$V_0$). The rotating examples resolve this by conjugating with a
group element ($e^{2\pi J}=P$ in Example~\ref{ex:holonomy},
$U_\pi=P$ in Remark~\ref{rem:oscillator}) that commutes with $A_0$
--- the general mechanism behind
Theorem~\ref{thm:classification}\emph{(ii)}: monodromy must lie in
the commutant.
\end{remark}

\section{Relation to the Discrete Theory}
\label{sec:discrete}

The companion paper \cite{alexa2026} developed the theory for a finite
index set $I$. We now make the embedding of that theory into the
present framework precise: the discrete theory corresponds exactly to
the flat sector of the continuous theory, the case in which the base
space carries no non-contractible loops and holonomy is automatically
trivial.

\begin{proposition}[Discrete reduction]
\label{prop:discrete-reduction}
Let $I$ be a finite set equipped with the discrete topology, regarded
as a $0$-dimensional manifold. Then:

\noindent\emph{(a)} If the operators share a common dense domain,
every family $\{A_i\}_{i\in I}$ of positive self-adjoint operators
with compact resolvent on $\mathcal{H}$ is a continuous operator
network over $I$ literally in the sense of
Definition~\ref{def:network}, all continuity requirements being
vacuous for the discrete topology. Without a common domain, the
results of Sections~\ref{sec:cocycles}--\ref{sec:gauge} ---
diagonal invertibility, generator transport, multiplicativity, and
gauge rigidity --- remain valid verbatim with individual domains
$\mathcal{D}(A_i)$, since no relation between domains at different
points is used there.

\noindent\emph{(b)} Every two-point intertwining cocycle over $I$
(Definition~\ref{def:cocycle}) coincides with a time-scaled
intertwining cocycle in the sense of \cite{alexa2026}. The gauge
rigidity Theorem~\ref{thm:continuous-gauge} reduces to the gauge
rigidity theorem of that paper: $\lambda_{ij}=\tau_i/\tau_j$ for
positive constants $\{\tau_i\}_{i\in I}$, unique up to a single
multiplicative constant --- since $\lambda$ is defined on all pairs,
the uniqueness argument of Theorem~\ref{thm:continuous-gauge} is
purely algebraic and needs no connectedness.

\noindent\emph{(c)} Over $I$ there is no monodromy of any kind. As a
$0$-dimensional manifold, $I$ admits no non-constant continuous
paths, so every path cocycle over $I$ has $K_\gamma=I$ for all loops.
The combinatorial cycles of the companion paper --- finite sequences
$i_0\to i_1\to\cdots\to i_k=i_0$ --- are compositions of two-point
transports and collapse to the identity by the cocycle
identity~\eqref{eq:cocycle-C}.
\end{proposition}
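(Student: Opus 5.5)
The proof checks the three parts of Proposition~\ref{prop:discrete-reduction} one at a time. Each part reduces to definitions already in the paper, together with the fact that a finite discrete space is a $0$-dimensional manifold whose only paths are constant.

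For \emph{(a)}, I would first run through Definition~\ref{def:network} and Assumption~\ref{ass:R} over $I$. Positivity, self-adjointness and compact resolvent are hypotheses. The common domain $D$ is the one assumed. Strong resolvent continuity at $y$ only concerns $x\to y$, and in the discrete topology this means $x=y$ eventually, so it holds trivially. The same reasoning makes joint strong continuity of $(i,j)\mapsto K_{ij}h$ and continuity of $\lambda$ automatic. For the version without a common domain, I would go back through the proofs of Propositions~\ref{prop:diagonal}, \ref{prop:generator}, \ref{prop:multiplicativity} and Theorem~\ref{thm:continuous-gauge} and record where domains enter.
\begin{itemize}
\item Proposition~\ref{prop:diagonal} uses only (C), (I), and an eigenvector of $A_x$ with positive eigenvalue.
\item Proposition~\ref{prop:generator} takes $h\in\mathcal{D}(A_y)$ and concludes $K_{x,y}h\in\mathcal{D}(A_x)$. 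It never uses $\mathcal{D}(A_x)=\mathcal{D}(A_y)$, so replacing $D$ by $\mathcal{D}(A_i)$ is harmless.
\item Proposition~\ref{prop:multiplicativity} and the gauge theorem use only semigroup identities on all of $\mathcal{H}$.
\end{itemize}
This verification is the only step with real content, and I expect it to be the main (mild) obstacle. The point to get right is that no step compares domains at different points, beyond what Proposition~\ref{prop:generator} itself derives.

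For \emph{(b)}, I would set the definitions side by side. Over finite $I$, Definition~\ref{def:cocycle} consists of the conditions (C), (I), boundedness and injectivity, with the continuity conditions vacuous. That is the data of a time-scaled intertwining cocycle with $\lambda_{ij}=\lambda(i,j)$ in \cite{alexa2026}. Theorem~\ref{thm:continuous-gauge} then gives $\tau_i:=\lambda(i,i_0)$ and $\lambda_{ij}=\tau_i/\tau_j$. Uniqueness follows from the algebraic identity $(\tau'/\tau)(i)=(\tau'/\tau)(j)$ for all pairs $i,j$, which needs no connectedness.

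For \emph{(c)}, any continuous $\gamma\colon[0,1]\to I$ has a connected image, and in a discrete space the connected subsets are singletons, so $\gamma$ is constant. Axiom (P1) then gives $K_\gamma=I$. Hence every loop transport is trivial, and in particular the holonomy group of Definition~\ref{def:holonomy} is $\{I\}$. For a combinatorial cycle $i_0\to\cdots\to i_k=i_0$, repeated use of (C) gives
\[
K_{i_0 i_1}K_{i_1 i_2}\cdots K_{i_{k-1}i_0}=K_{i_0 i_0},
\]
and Proposition~\ref{prop:diagonal}(i) gives $K_{i_0 i_0}=I$, exactly as in Remark~\ref{rem:discrete-flat}.
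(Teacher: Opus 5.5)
Your proposal is correct and follows the paper's proof essentially step for step. The shared steps are: continuity conditions are vacuous in the discrete topology; $\tau_i=\lambda(i,i_0)$ with purely algebraic uniqueness; paths into a discrete space are constant, so \emph{(P1)} gives $K_\gamma=I$; and combinatorial cycles telescope via \eqref{eq:cocycle-C} and Proposition~\ref{prop:diagonal}\emph{(i)}. The only difference is that you actually check, in part \emph{(a)}, where domains enter the arguments of Sections~\ref{sec:cocycles}--\ref{sec:gauge}, whereas the paper simply calls parts \emph{(a)} and \emph{(b)} immediate from the definitions.
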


\begin{proof}
Parts (a) and (b) are immediate from the definitions; the discrete
topology makes all continuity conditions vacuous, and the
finite-index gauge rigidity is a special case of
Theorem~\ref{thm:continuous-gauge}: $\tau_i=\lambda(i,i_0)$ for any
fixed base index $i_0$, by multiplicativity.

For part (c): a continuous map $\gamma\colon[0,1]\to I$ from a
connected interval to a discrete space is constant, so $K_\gamma=I$ by
\emph{(P1)}. For the combinatorial cycles, the cocycle
identity~\eqref{eq:cocycle-C} applied repeatedly gives
$K_{i_0 i_{k-1}}\cdots K_{i_1 i_0}=K_{i_0 i_0}=I$ by
Proposition~\ref{prop:diagonal}(i). In this precise sense the
discrete theory is the two-point (groupoid) analogue of the
endpoint-dependent flat sector of the path theory, rather than a
literal special case of it: a discrete base supports no non-constant
paths at all.
\end{proof}

\begin{remark}
The content of Proposition~\ref{prop:discrete-reduction}(c) is that
nontrivial holonomy is genuinely a phenomenon of the continuous
parameter space. When $M$ has a non-contractible loop --- as in the
example of Section~\ref{sec:example} where $M = S^1$ --- the holonomy
group can be nontrivial even for a flat connection ($F=0$), precisely
because $\pi_1(S^1)\cong\mathbb{Z}$.
Theorem~\ref{thm:classification} describes exactly which monodromies
occur. The discrete theory of \cite{alexa2026} is thus the degenerate
case in which this topological obstruction is absent.
\end{remark}

\section{Conclusion}
\label{sec:conclusion}

The central finding of this paper is that gauge rigidity is not an
artifact of discrete index sets but a structural consequence of the
semigroup intertwining relation in full generality. Whenever a
continuous family of dissipative (or unitary) dynamics is linked by a
cocycle of time-scaled transfer maps, the scaling field is necessarily
of gauge form $\lambda(x,y)=\tau(x)/\tau(y)$. The key step --- that
multiplicativity of $\lambda$ follows from the semigroup structure
rather than being imposed --- is the same algebraic argument that
works in the discrete case, and it works identically here.

What is genuinely new in the continuous setting is the geometry that
emerges once the index set becomes a manifold. The transfer operators
define a connection on a Hilbert bundle, and transport around
non-contractible loops acquires monodromy. This monodromy is tightly
constrained: it commutes with the generator at the base point,
decomposes into spectral sectors carrying adiabatic transport twisted
by independent commutant-valued sector potentials, and, for flat
unitary cocycles, is classified by unitary representations of
$\pi_1(M)$ in the commutant. Nontrivial holonomy is a concrete phenomenon: for a
rotating network on $\ell^2$ over $S^1$ (with a rotating anisotropic
oscillator as its physical counterpart), the holonomy is the parity
operator --- invisible to the gauge function $\tau$ and to the Berry
one-form of every level. The discrete theory of \cite{alexa2026} is
recovered as the two-point analogue of the flat, topologically
trivial sector of the present framework.

The relation to adiabatic quantum mechanics is now precise: the
restriction of the transport to each renormalized spectral sector is
Berry--Wilczek--Zee transport twisted by the sector potential, so
adiabatic theory captures the transport exactly on the class of
adiabatically normalized cocycles. The gap between holonomy and the
local Berry data is witnessed twice: by the sector potential, and,
even in the normalized case, by global topology --- in the example
every Berry one-form vanishes identically, yet the holonomy is a
nontrivial $\mathbb{Z}_2$ monodromy carried by M\"{o}bius eigenline
bundles.

Several questions remain open. The realization part of the
classification (Theorem~\ref{thm:classification}\emph{(ii)})
produces holonomies canonically on flat Hilbert bundles and, under
the stated triviality condition, on constant networks; classifying
flat cocycles over a \emph{fixed} network is a finer problem. The inverse
problem of reconstructing $\tau$ from spectral data of a continuous
mixture semigroup has not been addressed in the continuous setting.
The stability of holonomy under perturbations of $\{A_x\}$ is relevant
to robustness questions in adiabatic theory. Extensions to
non-self-adjoint generators and to infinite-dimensional parameter
spaces $M$ are natural next steps.

\section*{Statements and Declarations}

\noindent\textbf{Author contributions.}
A.A.\ performed all aspects of the work, including conceptualization,
mathematical analysis, and manuscript preparation.

\medskip
\noindent\textbf{Funding.}
The author received no financial support for the research, authorship,
or publication of this article.

\medskip
\noindent\textbf{Data availability.}
No data were generated or analyzed in this work.

\medskip
\noindent\textbf{Competing interests.}
The author declares no competing interests.

\bibliographystyle{amsplain}
\bibliography{references}

\end{document}